\newtheorem{theorem}{Theorem}[section]
\newtheorem{proposition}[theorem]{Proposition}
\newtheorem{definition}[theorem]{Definition}
\newtheorem{corollary}[theorem]{Corollary}
\newtheorem{lemma}[theorem]{Lemma}
\theoremstyle{remark}
\newtheorem{example}[theorem]{Example}
\newtheorem{remark}[theorem]{Remark}
\numberwithin{equation}{section}
\numberwithin{equation}{section}
\def\R{\mathbb R}
\def\C{\mathbb C}
\def\N{\mathbb N}
\def\e{\epsilon}
\def\1{\mathbbm{1}}
\def\tensor{\otimes}
\newcommand{\mx}[1]{\mathbf{#1}}
\long\def\symbolfootnote[#1]#2{\begingroup%
\def\thefootnote{\fnsymbol{footnote}}\footnote[#1]{#2}\endgroup}
\begin{document}

\title{Most Boson Quantum States are Almost Maximally Entangled}
\author{Shmuel Friedland}
\address{ Department of Mathematics, Statistics and Computer Science\\
University of Illinois at Chicago\\
Chicago, IL  60607-7045}
\email{friedlan@uic.edu}
\author{Todd Kemp}
\thanks{Kemp supported in part by NSF CAREER Award DMS-1254807}
\address{Department of Mathematics\\
University of California, San Diego \\
La Jolla, CA 92093-0112}
\email{tkemp@math.ucsd.edu}

\date{\today} 

\begin{abstract}  The geometric measure of entanglement $E$ of an $m$ qubit quantum state takes maximal possible value $m$.  In previous work of Gross, Flammia, and Eisert, it was shown that $E \ge m-O(\log m)$ with high probability as $m\to\infty$.  They showed, as a consequence, that the vast majority of states are too entangled to be computationally useful.  In this paper, we show that for $m$ qubit {\em Boson} quantum states (those that are actually available in current designs for quantum computers), the maximal possible geometric measure of entanglement is $\log_2 m$, opening the door to many computationally universal states.  We further show the corresponding concentration result that $E \ge \log_2 m - O(\log \log m)$ with high probability as $m\to\infty$.  We extend these results also to $m$-mode $n$-bit Boson quantum states.  \end{abstract}

\maketitle

\tableofcontents

\section{Introduction}

Quantum algorithms, in their ability to perform computations exponentially faster than what is strongly believed to be the maximum
speed of many classical algorithms, heavily depend on quantum entanglement.  For example, in \cite{Shor-Implement}, the authors  implemented a compiled version of Shor's quantum factoring algorithm in a photonic system, and observed high levels of entanglement.
It would be tempting to conclude that ``the more entanglement, the better'' when it comes to quantum computation.  In \cite{GFE09}, Gross, Flammia, and Eisert showed that this intuition is incorrect.  The {\em geometric measure of entanglement} $E$ is a monotone function on quantum states which takes values between $0$ (for product states) and $m$ in a system with $m$ qubits. (See Section \ref{section geometric measure} below for details).  Gross et.\ al.\ proved that if $\Psi$ is an $m$ qubit state with $E(\Psi)>m-\delta$, and if an NP problem can be solved by a computer with the power to perform local measurements on $\Psi$, then there is a purely classical algorithm that can solve the same problem in a time only approximately $2^\delta$ times longer.  Hence, any such states with $\delta = O(\log m)$ cannot be ``computationally universal''.  They then go on to show, remarkably, that the {\em vast majority} of quantum states have entanglement $E(\Psi)>m-O(\log m)$: letting $\mathbb{P}$ denote the Haar probability measure on the sphere of all $m$ qubit quantum states,
\begin{equation} \label{e.GFE} \mathbb{P}\Big(E(\Psi)\ge m-2\log_2(m)-3\Big)\ge 1-e^{-m^2}, \qquad \text{for }m\ge 11. \end{equation}
Hence, for large $m$, the proportion of states which may actually be used to gain more than a linear factor in performance is vanishingly small.

This situation may therefore seem dire, but the analysis ignores one glaring issue.  In any quantum computer based on photon interactions, all observed states are {\em Boson} quantum states (see Sections \ref{section symmetric tensors} and \ref{section Bosons}).  Bosons, or symmetric quantum states, form a small subspace of all states: the space of all $m$-mode tensors over $\C^2$ has dimension $2^m$, while the space of symmetric $m$-mode tensors over $\C^2$ has dimension $m+1$, exponentially smaller.  This has significant consequences.  For example, in a recent paper by the first coauthor and L. Wang, it is shown that the geometric entanglement of $m$ qubit Bosons is polynomially computable in $m$; cf.\ \cite{FW16}.  Presently, we show that the maximum possible geometric measure of entanglement of a Boson quantum state is much smaller than in the full space.

\begin{theorem} \label{t.max.entanglement} Let $n,m\ge 1$.  Denote $d_{n,m} = \binom{m+n-1}{m}$.  If $\Psi$ is an $m$ Boson quantum state on $\C^n$, then the geometric measure of entanglement of $\Psi$ satisfies
\[ 0\le E(\Psi) \le \log_2 d_{n,m}. \]
In particular, if $\Psi$ is an $m$ qubit Boson state, then $E(\Psi) \le \log_2(m+1)$. \end{theorem}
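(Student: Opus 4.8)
The plan is to reduce the bound on $E$ to an estimate on the maximal overlap of $\Psi$ with a product state. Recall from Section~\ref{section geometric measure} that the geometric measure of entanglement is
\[ E(\Psi) = -\log_2 \Lambda(\Psi), \qquad \Lambda(\Psi) = \max\big\{|\langle v_1\tensor\cdots\tensor v_m,\Psi\rangle|^2 : v_j\in\C^n,\ \|v_j\|=1\big\}. \]
The lower bound $E(\Psi)\ge 0$ is immediate, since Cauchy--Schwarz gives $|\langle v_1\tensor\cdots\tensor v_m,\Psi\rangle|\le 1$, so $\Lambda(\Psi)\le 1$ and $-\log_2\Lambda(\Psi)\ge 0$. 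For the upper bound it therefore suffices to produce a \emph{single} product state with overlap at least $1/d_{n,m}$, that is, to prove $\Lambda(\Psi)\ge 1/d_{n,m}$. The gain over the full Hilbert space, where the analogous averaging argument over all product states only yields $\Lambda(\Psi)\ge 2^{-m}$, comes from exploiting that $\Psi$ lives in the much smaller symmetric subspace $\mathrm{Sym}^m(\C^n)\subset(\C^n)^{\tensor m}$, of dimension $d_{n,m}$.

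The key is that the \emph{symmetric} product (coherent) states $v^{\tensor m}$, a subfamily of all product states, already resolve the identity on this subspace. Concretely, I would first establish the completeness relation
\[ \int_{S} |v^{\tensor m}\rangle\langle v^{\tensor m}|\,d\mu(v) = \frac{1}{d_{n,m}}\,P_{\mathrm{Sym}}, \]
where $S$ is the unit sphere of $\C^n$, $\mu$ its normalized $U(n)$-invariant measure, and $P_{\mathrm{Sym}}$ the orthogonal projection onto $\mathrm{Sym}^m(\C^n)$. Pairing this operator identity against the unit vector $\Psi$ and using $P_{\mathrm{Sym}}\Psi=\Psi$ gives $\int_{S}|\langle v^{\tensor m},\Psi\rangle|^2\,d\mu(v) = 1/d_{n,m}$. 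Since the maximum of a nonnegative function dominates its average, some unit vector $v$ satisfies $|\langle v^{\tensor m},\Psi\rangle|^2\ge 1/d_{n,m}$; as $v^{\tensor m}$ is a product state this yields $\Lambda(\Psi)\ge 1/d_{n,m}$, hence $E(\Psi)\le\log_2 d_{n,m}$. Specializing to $n=2$, where $d_{2,m}=\binom{m+1}{m}=m+1$, gives the qubit statement.

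The main obstacle is justifying the completeness relation with the correct constant. I would argue by Schur's lemma. The operator $T=\int_{S}|v^{\tensor m}\rangle\langle v^{\tensor m}|\,d\mu(v)$ commutes with the diagonal action $U\mapsto U^{\tensor m}$ of $U(n)$, because $\mu$ is invariant and $U^{\tensor m}v^{\tensor m}=(Uv)^{\tensor m}$. Moreover $T$ is self-adjoint with range in $\mathrm{span}\{v^{\tensor m}\}=\mathrm{Sym}^m(\C^n)$, so it annihilates $\mathrm{Sym}^m(\C^n)^\perp$ and preserves $\mathrm{Sym}^m(\C^n)$, on which $U^{\tensor m}$ acts irreducibly. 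Schur's lemma then forces $T=c\,P_{\mathrm{Sym}}$ for a scalar $c$, and taking traces pins it down: $\tr T=\int_{S}\|v^{\tensor m}\|^2\,d\mu(v)=1$ while $\tr P_{\mathrm{Sym}}=\dim\mathrm{Sym}^m(\C^n)=d_{n,m}$, giving $c=1/d_{n,m}$. The only point needing care is that the coherent states actually span $\mathrm{Sym}^m(\C^n)$ (so that $c\neq 0$), which holds because the monomials $v^{\tensor m}$ generate the symmetric power; this guarantees the constant is genuinely $1/d_{n,m}$.
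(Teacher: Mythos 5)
Your proposal is correct and follows essentially the same route as the paper: the paper's Proposition \ref{p.Schur} is precisely your coherent-state completeness relation (restricted to $S^m(\C^n)$, where $P_{\mathrm{Sym}}$ acts as the identity), proved in the same way via invariance of the Haar measure, irreducibility of the symmetric power representation of $\mathrm{U}(n)$ (the paper's Lemma \ref{l.irreducible}, via Schur--Weyl), Schur's lemma, and the trace normalization $\mathrm{Tr}(P_{\mx{v}}^{\tensor m})=1$; the deduction of the bound by comparing the maximum of $\mx{v}\mapsto|\langle \mx{v}^{\tensor m},\Psi\rangle_2|^2$ with its average is also identical. One small simplification on your side: since the trace computation already pins down $c=1/d_{n,m}$, your closing worry about the coherent states spanning $\mathrm{Sym}^m(\C^n)$ is not actually needed.
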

In particular, Gross, Flammia, and Eisert's argument about the usefulness of entangled states does not produce a pessimistic result here: since the the smallest $\delta>0$ for which $E(\Psi)>m-\delta$ is $\delta=O(m)$ for Boson states $\Psi$, the classical algorithm in \cite{GFE09} is exponentially slower than the quantum algorithm, as expected.

The main theorem of this paper addresses the proportion of Boson quantum states are that close to maximally entangled (even though this does not bear on their usefulness for computation).  We address the general question of Bosons over any finite-dimensional state space of dimension $\ge 2$.

\begin{theorem} \label{t.main} Let $n\ge 2$ and $m\ge 1$.  Denote $d_{n,m} = \binom{m+n-1}{m}$.  For fixed $n$,
\begin{equation} \label{e.main.1} \mathbb{P}\Big(E(\Psi) \ge \log_2 d_{n,m} - \log_2\log_2 d_{n,m} - 3\log_2 n - 1\Big) \ge 1-(d_{n,m})^{-n^3}  \end{equation}
for $m$ sufficiently large.   In particular, in the case $n=2$ where $\Psi$ is an $m$ qubit Boson state, we have
\begin{equation} \label{e.main.2} \mathbb{P}\Big(E(\Psi) \ge \log_2 m-\log_2\log_2 m-3\Big) \ge 1-\frac{1}{2m^{5/2}}, \qquad \text{for}\; m> 42. \end{equation}
\end{theorem}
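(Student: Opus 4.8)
The plan is to reduce the statement to a concentration estimate for the single scalar random variable
\[ \Lambda(\Psi) := \max_{\|v\|=1} |\langle v^{\otimes m},\Psi\rangle|^2, \]
the maximum being over unit vectors $v\in\C^n$. For a Boson state the nearest product state may be taken symmetric, so (this is the identity underlying Theorem \ref{t.max.entanglement}) the geometric measure is $E(\Psi)=-\log_2\Lambda(\Psi)$. Writing $d:=d_{n,m}$, the event in \eqref{e.main.1} is exactly $\{\Lambda(\Psi)\le 2^{-(\log_2 d-\log_2\log_2 d-3\log_2 n-1)}\}=\{\Lambda(\Psi)\le \tfrac{2n^3\log_2 d}{d}\}$. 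Setting $\tau:=\sqrt{2n^3\log_2 d/d}$ and $f(v):=|\langle v^{\otimes m},\Psi\rangle|$, it therefore suffices to bound $\mathbb{P}(\max_{\|v\|=1} f(v)>\tau)$ above by $d^{-n^3}$.

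Two ingredients drive the argument. First, a pointwise tail: since $\Psi$ is Haar-distributed on the unit sphere of the $d$-dimensional space of symmetric tensors and $v^{\otimes m}$ is a fixed unit vector, the squared overlap $f(v)^2$ has a $\mathrm{Beta}(1,d-1)$ law, whence
\[ \mathbb{P}\big(f(v)^2 > t\big) = (1-t)^{d-1}\le e^{-(d-1)t}, \qquad 0\le t\le 1, \]
for each fixed $v$. Second, a Lipschitz bound: the telescoping identity $v^{\otimes m}-w^{\otimes m}=\sum_{k=0}^{m-1} v^{\otimes k}\otimes(v-w)\otimes w^{\otimes(m-1-k)}$ gives $\|v^{\otimes m}-w^{\otimes m}\|\le m\|v-w\|$ for unit vectors $v,w$, so $|f(v)-f(w)|\le m\|v-w\|$; that is, $f$ is $m$-Lipschitz on the unit sphere $S^{2n-1}\subset\C^n\cong\R^{2n}$.

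I would then combine these by a net argument. Fix $\epsilon\in(0,1)$ and let $\mathcal{N}_\epsilon$ be an $\epsilon$-net of $S^{2n-1}$ with $|\mathcal{N}_\epsilon|\le(3/\epsilon)^{2n}$. If $v^\ast$ achieves $\max_v f$ and $w\in\mathcal{N}_\epsilon$ is nearest to $v^\ast$, then $\max_v f(v)\le f(w)+m\epsilon\le \max_{\mathcal{N}_\epsilon} f + m\epsilon$. Choosing $\epsilon=c\tau/m$ with a small fixed constant $c$, the event $\{\max_v f>\tau\}$ is contained in $\{\max_{\mathcal{N}_\epsilon} f^2>(1-c)^2\tau^2\}$, and a union bound gives
\[ \mathbb{P}\big(\max_v f(v)>\tau\big)\le |\mathcal{N}_\epsilon|\,\big(1-(1-c)^2\tau^2\big)^{d-1}\le \Big(\tfrac{3m}{c\tau}\Big)^{2n} e^{-(d-1)(1-c)^2\tau^2}. \]
With $\tau^2=2n^3\log_2 d/d$ the exponential factor is of order $d^{-2(1-c)^2 n^3/\ln 2}$, while, using $d\sim m^{n-1}/(n-1)!$ for fixed $n$, the prefactor $(3m/c\tau)^{2n}$ is polynomial in $d$ (of degree $n(n+1)/(n-1)$, up to logarithmic factors). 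Since $2(1-c)^2 n^3/\ln 2 > n^3 + n(n+1)/(n-1)$ for every $n\ge 2$ once $c$ is small, the product is at most $d^{-n^3}$ for $m$ large, proving \eqref{e.main.1}.

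The main obstacle is not conceptual but quantitative: one must choose $\epsilon$ (equivalently $c$), and keep the constants, sharp enough that the Beta tail $(1-t)^{d-1}$ genuinely dominates both the net cardinality and the target exponent $d^{-n^3}$. For the qubit case $n=2$ of \eqref{e.main.2}, this bookkeeping must be carried out honestly---retaining $(1-t)^{d-1}$ rather than its exponential bound, counting the net on $S^3$ explicitly, and optimizing $\epsilon$---in order to extract the clean threshold $\log_2 m-\log_2\log_2 m-3$ and the explicit failure probability $\tfrac{1}{2}m^{-5/2}$ valid for $m>42$. All of this is routine estimation; the substance of the proof is the $m$-Lipschitz estimate for $v\mapsto v^{\otimes m}$ together with the exact $\mathrm{Beta}(1,d-1)$ law of a single symmetric overlap.
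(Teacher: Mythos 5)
Your proposal is correct and follows the same architecture as the paper's proof: reduce via Banach's Theorem \ref{t.Banach} to the scalar maximum $\max_{\|\mx{v}\|=1}|\langle \Psi,\mx{v}^{\tensor m}\rangle_2|$, discretize the low-dimensional parameter sphere using the $m$-Lipschitz bound for $\mx{v}\mapsto\mx{v}^{\tensor m}$ (your telescoping identity is exactly Lemma \ref{l.tensor.deriv.ineq}), take a union bound against a pointwise overlap tail, and choose the identical threshold $\e^2=2n^3\log_2 d_{n,m}/d_{n,m}$. You differ in two ingredients, both defensible and one genuinely advantageous. First, where the paper quotes the concentration inequality of Lemma \ref{l.Hiai.Petz} with exponent $(2d-1)\e^2$, you use the exact $\mathrm{Beta}(1,d-1)$ law, giving $(1-t)^{d-1}\le e^{-(d-1)t}$. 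This is self-contained, and in fact it is the safe version: for small $\e$ the true tail $(1-\e^2)^{d-1}\approx e^{-(d-1)\e^2}$ exceeds $e^{-(2d-1)\e^2}$, so the quoted exponent cannot be taken at face value; with the honest $(d-1)t$ exponent, the paper's cut at $\e/2$ (Corollary \ref{c.1}) would only yield a tail factor of order $d^{-n^3/(2\ln 2)}$, which does \emph{not} clear $d^{-n^3}$ after the polynomial net factor. Your cut at $(1-c)\tau$ with small $c$, yielding $d^{-2(1-c)^2 n^3/\ln 2}$, is therefore doing real work, not cosmetic sharpening, and your arithmetic ($2(1-c)^2n^3/\ln 2>n^3+n(n+1)/(n-1)$ for all $n\ge 2$ once $c\lesssim 0.2$) checks out, establishing \eqref{e.main.1}. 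Second, you net the full sphere $S^{2n-1}$, whereas the paper first quotients by the global phase---the function $\mx{v}\mapsto|\langle\Psi,\mx{v}^{\tensor m}\rangle_2|$ is phase-invariant---and nets $\mathcal{B}^1(\C^n)$, of real dimension $2(n-1)$ (Lemma \ref{l.net}). For \eqref{e.main.1} this costs you only polynomial-in-$d$ factors and is harmless.

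The one soft spot is \eqref{e.main.2}, which you defer as ``routine estimation.'' With your ingredients as stated it is not: the ball-volumetric bound $|\mathcal{N}_\e|\le(3/\e)^{2n}$ at $n=2$, with $\e=c\tau/m$ and $\tau^2=8\log_2(m+1)/(m+1)$, produces a prefactor of order $\frac{(3/c)^4}{64}\,m^6/\log_2^2 m$ against a tail of order $m^{-8(1-c)^2/\ln 2}$. Matching $\tfrac12 m^{-5/2}$ forces $\frac{8(1-c)^2}{\ln 2}\ge 6+\tfrac52$, i.e.\ $c\le 0.14$, at which point the constant $(3/c)^4/64\gtrsim 3\times 10^3$ must be absorbed by a leftover power of $m$ whose exponent is essentially zero; this fails for $m$ in the hundreds, so the explicit claim ``$\tfrac12 m^{-5/2}$ for $m>42$'' does not follow from your bound as written. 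The fix is exactly the paper's device: exploit the phase invariance before netting, reducing to a sphere of real dimension $2(n-1)=2$, where the polar-coordinate bound $K_2\le\pi$ (Remark \ref{r.K2}) gives a prefactor of degree $3$ in $m$ rather than $6$, after which your exact-tail exponent $8(1-c)^2/\ln 2\approx 11.5(1-c)^2$ closes the bookkeeping with room to spare, even with a comfortable $c\approx 0.3$. (Alternatively, a genuine covering bound for the $3$-sphere of order $\e^{-3}$, rather than the $\e^{-4}$ ball bound, would also suffice.) In short: \eqref{e.main.1} is fully established by your argument, and for \eqref{e.main.2} your plan is sound but needs the phase-reduced net to deliver the stated explicit constants.
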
 

\begin{remark}\begin{itemize}
\item[(1)] The general condition on the size of $m$ to yield \eqref{e.main.1} is tedious to state.  Regardless, one finds that for all $m$, the probability is bounded below by $1-C(n)(d_{n,m})^{-n^3}$ for some constant $C(n)$ that does not depend on $m$. 
\item[(2)] The exponents $n^3$ and $5/2$ in \eqref{e.main.1} and \eqref{e.main.2} are not sharp.  In fact, the actual rate of decay is super-polynomial: the analysis in Section \ref{t.main} shows that, for any exponent $b>0$, there is a constant $a>0$ so that, for all $m$ sufficiently large,
\[  \mathbb{P}\Big(E(\Psi) \ge \log_2 d_{n,m} - \log_2\log_2 d_{n,m} - a\Big) \ge 1-(d_{n,m})^{-b}. \]
Note also that $d_{n,m} = O(m^{n-1})$, so the bounds could be stated in terms of super-polynomial decay in $m$ instead of $d_{n,m}$.
\item[(3)] However, we cannot prove super-exponential Gaussian-type concentration with this method, owing to the fact that the dimension $d_{n,m}$ of the symmetric tensor space is exponentially smaller than the dimension of the full tensor space.
\end{itemize}
\end{remark}

Our method of proof essentially follows \cite{GFE09}.  We use well-known concentration of measure for the Haar measure in high dimensional complex spheres, in conjunction with a sufficiently sharp bound on the cardinality of a net (of given tolerance) covering the sphere.  For this latter $\e$-net result, we give a very different proof from the one due to Gross et.\ al.  Much of the present work is to setup the problem in the restricted setting of symmetric tensors.  We now proceed to develop the proper background and notation required for this task.

\section{Background and Notation}

\subsection{Symmetric Tensors\label{section symmetric tensors}}

We work over a fixed finite dimensional complex vector space $V$, with the main object of interest being the {\em $m$-mode tensors} $V^{\tensor m}$.  There is a natural action of the symmetric group $\mathfrak{S}_m$ on $V^{\tensor m}$, given by the $\C$-linear extension of
\begin{equation} \label{e.Schur.rep} \sigma\cdot \tensor_{j=1}^m \mx{v}_j = \tensor_{j=1}^m \mx{v}_{\sigma^{-1}(j)}. \end{equation}
A tensor $T\in V^{\tensor m}$ is called {\bf symmetric} if $\sigma\cdot T = T$ for all $\sigma\in\mathfrak{S}_m$.  We denote
the subspace of symmetric tensors as $S^m(V)\subset V^{\tensor m}$.

There is a natural projection $P_m\colon V^{\tensor m}\to S^m(V)$; it is the $\C$-linear extension of
\[ P_m(\tensor_{j=1}^m \mx{v}_j) = \frac{1}{m!}\sum_{\sigma\in\mathfrak{S}_m} \sigma\cdot(\tensor_{j=1}^m \mx{v}_j). \]
Then $T\in V^{\tensor n}$ is symmetric iff $P_m(T) = T$.  We denote
\[ P_m(\tensor_{j=1}^m \mx{v}_j) = \odot_{j=1}^m \mx{v}_j \]
and so we may refer to $S^m(V)$ as $V^{\odot m}$.

Note that for any $\mx{v}\in V^{\tensor m}$, the rank-$1$ tensor $\mx{v}^{\tensor m}$ is symmetric: $\mx{v}^{\tensor m} = \mx{v}^{\odot m}$.  In fact, any symmetric tensor can be decomposed as a sum of rank-$1$ tensors.

\begin{proposition} \label{p.decomp.symmetric.rank1} For each $T\in S^m(V)$, there is a finite sequence $\{\mx{v}_1,\ldots,\mx{v}_r\}$ in $V$ such that
\begin{equation} \label{e.symmetric.tensor.rank} T = \sum_{j=1}^r (\mx{v}_j)^{\tensor m}. \end{equation}
\end{proposition}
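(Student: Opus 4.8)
The plan is to show that the pure powers $\{\mx{v}^{\tensor m}:\mx{v}\in V\}$ span $S^m(V)$, and then to absorb the resulting scalar coefficients into the vectors using complex $m$-th roots. Since each $\mx{v}^{\tensor m}$ is symmetric, the span $W=\mathrm{span}_\C\{\mx{v}^{\tensor m}:\mx{v}\in V\}$ is automatically contained in $S^m(V)$; the real work is the reverse inclusion $S^m(V)\subseteq W$.

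First I would establish the reverse inclusion by polarization. Fixing vectors $\mx{v}_1,\ldots,\mx{v}_m\in V$, I claim the identity
\[ m!\,\mx{v}_1\odot\cdots\odot \mx{v}_m = \sum_{S\subseteq\{1,\ldots,m\}} (-1)^{m-|S|}\Big(\sum_{i\in S}\mx{v}_i\Big)^{\tensor m}. \]
To verify it, expand each power $(\sum_{i\in S}\mx{v}_i)^{\tensor m}$ into symmetrized monomials indexed by multi-indices $\alpha$ with $|\alpha|=m$ and $\mathrm{supp}(\alpha)\subseteq S$, weighted by multinomial coefficients. Collecting the coefficient of a fixed monomial with support $U$, the sign-weighted sum over the sets $S\supseteq U$ telescopes to $(1-1)^{m-|U|}$, which vanishes unless $U=\{1,\ldots,m\}$; that constraint, together with $|\alpha|=m$, forces $\alpha=(1,\ldots,1)$ and leaves exactly $m!\,\mx{v}_1\odot\cdots\odot\mx{v}_m$. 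Since the symmetrized basis monomials $\mx{e}_{i_1}\odot\cdots\odot\mx{e}_{i_m}$ (the $\mx{e}_i$ ranging over a basis of $V$) span $S^m(V)$ as the images under $P_m$ of a basis of $V^{\tensor m}$, this identity exhibits every such monomial, and hence every element of $S^m(V)$, as a $\C$-linear combination of pure powers, giving $S^m(V)\subseteq W$.

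Finally I would remove the scalar coefficients. A spanning expression for $T\in S^m(V)$ has the form $T=\sum_{j}c_j\,\mx{w}_j^{\tensor m}$ with $c_j\in\C$. Because $\C$ is algebraically closed, each $c_j$ admits an $m$-th root $\zeta_j$ with $\zeta_j^m=c_j$, and then $c_j\,\mx{w}_j^{\tensor m}=(\zeta_j\mx{w}_j)^{\tensor m}$. Setting $\mx{v}_j=\zeta_j\mx{w}_j$ yields the desired representation $T=\sum_j \mx{v}_j^{\tensor m}$ with no leftover coefficients.

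The main obstacle is the spanning step $S^m(V)\subseteq W$; the bookkeeping in the polarization identity (tracking multinomial coefficients against the telescoping sign-sum) is where care is needed. An alternative route avoids the explicit combinatorics entirely: if the powers failed to span, some nonzero functional $\phi\in S^m(V)^\ast$ would vanish on every $\mx{v}^{\tensor m}$, but $\mx{v}\mapsto\phi(\mx{v}^{\tensor m})$ is a degree-$m$ homogeneous polynomial on $V$, and a polynomial vanishing identically on a vector space over the infinite field $\C$ must be the zero polynomial, forcing $\phi=0$, a contradiction. Either argument closes the proof.
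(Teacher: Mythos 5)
Your proof is correct, but it takes a genuinely different route from the paper, which offers no in-text argument at all: it simply cites \cite{AH95} for infinite fields and \cite[Proposition 7.2]{FS13} for finite fields with at least $m$ elements. Your polarization identity
\[ m!\,\mx{v}_1\odot\cdots\odot\mx{v}_m \;=\; \sum_{S\subseteq\{1,\ldots,m\}} (-1)^{m-|S|}\Bigl(\sum_{i\in S}\mx{v}_i\Bigr)^{\tensor m} \]
checks out: expanding $\bigl(\sum_{i\in S}\mx{v}_i\bigr)^{\tensor m}$ over tuples in $S^m$ and using $\sum_{S\supseteq U}(-1)^{m-|S|}=(1-1)^{m-|U|}$ kills every tuple whose support $U$ is a proper subset of $\{1,\ldots,m\}$, and the surviving tuples with full support are exactly the permutations, yielding $m!$ times the symmetrized product. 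Combined with the fact that the $\mx{e}_{\odot\mx{i}}$ span $S^m(V)$ (they are the images under $P_m$ of a basis of $V^{\tensor m}$), this gives the spanning claim, and the $m$-th-root step correctly absorbs the scalars since $\C$ is algebraically closed. What each approach buys: your argument is elementary and self-contained, which is arguably preferable here, since \cite{AH95} is a deep interpolation theorem and far more than mere spanning requires; the paper's citations, on the other hand, cover more general fields. One caveat worth internalizing: your two hypotheses are genuinely used --- dividing by $m!$ needs characteristic zero (or characteristic $>m$), and absorbing the coefficients $c_j$ needs $m$-th roots, which is where algebraic closedness (not mere infinitude) of the field enters. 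Indeed, over $\R$ with $m$ even the statement as written fails: any sum $\sum_j \mx{v}_j\tensor\mx{v}_j$ is a positive semidefinite matrix, so $\mx{e}_1\tensor\mx{e}_2+\mx{e}_2\tensor\mx{e}_1$ admits no such representation. Your alternative duality argument (a functional vanishing on all $\mx{v}^{\tensor m}$ gives a homogeneous polynomial vanishing identically, hence zero) is also valid and dispatches the spanning step in one line, though it too requires the root-extraction step to finish.
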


Proposition \ref{p.decomp.symmetric.rank1} holds for any symmetric tensor with entries in an infinite field $\mathbb{F}$, as proved in \cite{AH95}.  Moreover, the above decomposition holds for tensors with entries in a finite field $\mathbb{F}$ provided $\#\mathbb{F}\ge m$; cf.\ \cite[Proposition 7.2]{FS13}.

\begin{remark} The minimal $r$ which can be used in \eqref{e.symmetric.tensor.rank} is called the {\em symmetric tensor rank}
of $T$.  The {\em Comon Conjecture} states that this is equal to the usual rank of $T$ in general;
it is only known to hold in certain special cases; cf.\ \cite{Frsymrank15} .
\end{remark}

Now, fix a basis $\{\mx{e}_j\}_{j=1}^n$ of $V$; then we can expand any tensor $T\in V^{\tensor n}$ in terms of the tensor basis
\[ \{\mx{e}_{\tensor\mx{j}}\colon \mx{j}\in[n]^m\} \]
where if $\mx{j}=(j_1,\ldots,j_m)\in[n]^m$ then $\mx{e}_{\tensor\mx{j}} = \tensor_{k=1}^m \mx{e}_{j_k}$.  The symmetric projections of these basis tensors, $\mx{e}_{\odot\mx{j}} = P_m(\mx{e}_{\tensor\mx{j}}) = \odot_{k=1}^m \mx{e}_{j_k}$ are not
linearly independent, since $\sigma\cdot\mx{e}_{\odot\mx{j}} = \mx{e}_{\odot\mx{j}}$ for all $\sigma\in\mathfrak{S}_m$.  To
generate a basis, we consider only nondecreasing indices:
\[ [n]^{\uparrow m} = \{\mx{i}=(i_1,\ldots,i_m)\in[n]^m\colon i_1\le \cdots \le i_m\}. \]
The set $\{\mx{e}_{\odot\mx{i}}\colon\mx{i}\in[n]^{\uparrow m}\}$ is a basis for $S^m(V)$, and hence
\begin{equation} \label{e.dimension} \mathrm{dim}_{\C}(S^m(V)) = \#([n]^{\uparrow m}) = \textstyle{\binom{n+m-1}{m}}:=d_{n,m}. \end{equation}

Given any $\mx{j}\in[n]^m$, there is a unique $\mx{i}\in[n]^{\uparrow m}$ such that, for some $\sigma\in\mathfrak{S}_m$, $\sigma\cdot\mx{j} = \mx{i}$.  (There may be several $\sigma$ that work here, but there is only one $\mx{i}\in[n]^{\uparrow m}$.)  Denote
this unique element as $\mx{i} = \uparrow\!(\mx{j})$.  Let us define
\[ c(\mx{i}) = \#\{\mx{j}\in[n]^m\colon \uparrow\!(\mx{j}) = \mx{i}\}. \]
This coefficient can be computed thus: the index $\mx{i}$ induces a set partition $\pi(\mx{i})$ of $[m]$, where $k\sim_{\pi(\mx{i})}\ell$ iff $i_k=i_\ell$.  (Thus, if $\uparrow\!(\mx{j})=\mx{i}$, then $\pi(\mx{j})=\pi(\mx{i})$.)  If $\pi(\mx{i})$ has blocks of sizes $m_1,m_2,\ldots,m_b$, then
\begin{equation} \label{e.c(i)} c(\mx{i}) = \frac{m!}{m_1!\cdots m_b!}. \end{equation}

Let $T\in S^m(V)$, and expand $T$ in terms of the standard basis $\{\mx{e}_{\tensor\mx{i}}\colon\mx{i}\in[n]^m\}$ of the
full tensor space $V^{\tensor m}$, and also in terms of the basis $\{\mx{e}_{\odot\mx{i}}\colon\mx{i}\in[n]^{\uparrow m}\}$ of
$S^m(V)$:
\[ T = \sum_{\mx{j}\in[n]^m} T_{\mx{j}}\,\mx{e}_{\tensor\mx{j}}, \qquad T = \sum_{\mx{i}\in[n]^{\uparrow m}} T_{\mx{i}}'\,\mx{e}_{\odot\mx{i}}. \]
We can express the coefficients $T_{\mx{i}}'$ in terms of the coefficients $T_{\mx{j}}$, as follows.  Since $T\in S^m(V)$,
$T_{\mx{j}} = T_{\uparrow\!(\mx{j})}$ for all $\mx{j}\in[n]^m$.  Thus
\begin{align*} T = \sum_{\mx{j}\in[n]^m} T_{\mx{j}}\,\mx{e}_{\tensor\mx{j}} = \sum_{\mx{i}\in[n]^{\uparrow m}}\sum_{\mx{j}\in[n]^m\atop \uparrow(\mx{j})=\mx{i}} T_{\mx{j}}\,\mx{e}_{\tensor\mx{j}}
=  \sum_{\mx{i}\in[n]^{\uparrow m}} T_{\mx{i}} \sum_{\mx{j}\in[n]^m\atop \uparrow(\mx{j})=\mx{i}} \mx{e}_{\tensor\mx{j}}
 = \sum_{\mx{i}\in[n]^{\uparrow m}} T_{\mx{i}}\, c(\mx{i})\,\mx{e}_{\odot\mx{i}}.
\end{align*}
We conclude that
\[ T_{\mx{i}}' = c(\mx{i})T_{\mx{i}}. \]

\subsection{The Hilbert--Schmidt inner product}

Now we fix an inner product $\langle\cdot,\cdot\rangle_V$ on $V$.  This induces an inner product on $V^{\tensor m}$, typically called the {\em Hilbert--Schmidt} inner product, denoted $\langle\cdot,\cdot\rangle_2$.  It is the unique sesquilinear extension of
\[ \langle \tensor_{j=1}^m \mx{v}_j,\tensor_{j=1}^m \mx{w}_j\rangle_2 = \prod_{j=1}^m \langle \mx{v}_j,\mx{w}_j\rangle_V. \]
The basis $\{\mx{e}_{\tensor\mx{j}}\colon \mx{j}\in[n]^m\}$ is orthonormal with respect to the Hilbert--Schmidt inner product; hence for $S,T\in V^{\tensor m}$,
\[ \langle S,T\rangle_2 = \sum_{\mx{j}\in[n]^m} \overline{S_{\mx{j}}}T_{\mx{j}}. \]
If $S,T\in S^m(V)$, the Hilbert--Schmidt inner product can be written in terms of the symmetric coefficients as
\[ \langle S,T\rangle_2 = \Big\langle \sum_{\mx{i}\in[n]^{\uparrow m}} S'_{\mx{i}}\,\mx{e}_{\odot\mx{i}}, \sum_{\mx{j}\in[n]^{\uparrow m}} T'_{\mx{j}}\,\mx{e}_{\odot\mx{j}}\Big\rangle_2 = \sum_{\mx{i},\mx{j}\in[n]^{\uparrow m}} \overline{S'_{\mx{i}}}T'_{\mx{j}}\, \langle \mx{e}_{\odot\mx{i}},\mx{e}_{\odot\mx{j}}\rangle_2. \]
If $\mx{i},\mx{j}\in[n]^{\uparrow m}$ are not equal, then for any $\sigma\in\mathfrak{S}_m$ there will be some index $k\in[m]$ with
$i_k\ne j_{\sigma(k)}$.  Since the original basis is orthonormal, it follows that $\langle \mx{e}_{\odot\mx{i}},\mx{e}_{\odot\mx{j}}\rangle_2=0$.  On the other hand, if $\mx{i}=\mx{j}$, then it is straightforward to compute that
$\langle \mx{e}_{\odot\mx{i}},\mx{e}_{\odot\mx{i}}\rangle_2 = \frac{1}{c(\mx{i})}$.  Hence, we find that
\begin{equation} \label{e.HS.inner.product.symmetric} \langle S,T\rangle_2 = \sum_{\mx{i}\in[n]^{\uparrow m}} \frac{1}{c(\mx{i})}\, \overline{S'_{\mx{i}}}T'_{\mx{i}}. \end{equation}
In particular, this shows that the basis vectors $\{\mx{e}_{\odot\mx{i}}\colon\mx{i}\in[n]^{\uparrow m}\}$ are orthogonal,
but not generally normalized: $\|\mx{e}_{\odot\mx{i}}\|_2= c(\mx{i})^{-1/2}$.  We can then normalize them
\begin{equation} \label{e.hat.e.i} \hat{\mx{e}}_{\odot\mx{i}} = \sqrt{c(\mx{i})}\,\mx{e}_{\odot\mx{i}} \end{equation}
to produce an orthonormal basis of $S^m(V)$.  For $m=2$ the above basis is called Dicke states; c.f.\ \cite{Dic}.

\subsection{The Spectral Norm}

In the special case $m=2$, $V^{\tensor 2}$ can be identified with $\mathrm{End}(V)$ (using an inner product on $V$) by
the linear extension of the map $\mx{u}\tensor\mx{v}\mapsto \mx{u}\mx{v}^\ast$.  (In physics notation, this rank-$1$ linear
transformation is usually denoted  $|\mx{u}\rangle\langle\mx{v}|$.)  Under
this identification, the Hilbert--Schmidt inner product introduced above gives the usual Hilbert--Schmidt inner product on matrices: $\langle A,B\rangle_2 =  \sum_{i,j=1}^n \bar{A}_{ij}B_{ij} = \mathrm{Tr}(A^\ast B)$.  While easy to compute, the resulting norm $\|\cdot\|_2$ on matrices is not as widely useful as the {\em spectral norm} of $A$:
\[ \|A\|_{\infty} = \max\{\|A\mx{v}\|\colon \|\mx{v}\| = 1\}. \]
(For short, we denote the inner product $\langle\cdot,\cdot\rangle_V$ simple as $\langle\cdot,\cdot\rangle$; the corresponding norm on $V$ is thus denoted $\|\cdot\|$.)

\begin{remark} Another, perhaps more common, term used is {\em operator norm}.  It is called the spectral norm because it is the modulus of the largest singular value of $A$, or the largest eigenvalue of $\sqrt{A^\ast A}$. \end{remark}

A nice alternative way to compute the spectral norm is as
\[ \|A\|_{\infty} = \max\{|\langle \mx{u},A\mx{v}\rangle|\colon \|\mx{u}\|=\|\mx{v}\|=1\}. \]
(The Cauchy--Schwarz inequality shows that $|\langle \mx{u},A\mx{v}\rangle|\le \|A\|_\infty$; by taking $\mx{u} = A\mx{v}/\|A\mx{v}\|$,
we see that the maximum is achieved.)  At the same time, note that
\[ \langle \mx{u}\mx{v}^\ast, A\rangle_2 = \mathrm{Tr}((\mx{u}\mx{v}^\ast)^\ast A) = \mathrm{Tr}(\mx{v}\mx{u}^\ast A) = \mathrm{Tr}(A\mx{v}\mx{u}^\ast) = \langle \mx{u},A\mx{v}\rangle. \]
Hence, using the above identification, a tensor $T\in V^{\tensor 2}$, identified as a matrix, has spectral norm
\[ \|T\|_\infty = \max\{|\langle T,\mx{u}\tensor\mx{v}\rangle_2|\colon \|\mx{u}\|=\|\mx{v}\|=1\}. \]
This prompts the following general definition.

\begin{definition} The {\bf spectral norm} $\|\cdot\|_\infty$ on $V^{\tensor m}$ is defined by
\begin{equation} \label{e.spectral.norm} \|T\|_\infty = \max\{|\langle T,\tensor_{j=1}^m \mx{v}_j\rangle_2|\colon \|\mx{v}_1\|=\cdots=\|\mx{v}_m\|=1\}. \end{equation}
\end{definition}
If $m=1$, then $\|\mx{v}\|_\infty = \|\mx{v}\|$ for $\mx{v}\in V^{\tensor 1} = V$.  As shown above, if $m=2$, the spectral norm
corresponds to the spectral norm of matrices under the usual identification of $2$-mode tensors as matrices.  In general, the spectral
norm is a {\em tensor norm}: it satisfies
\begin{equation} \label{e.tensor.norm} \|\tensor_{j=1}^m \mx{v}_j\|_\infty = \prod_{j=1}^m \|\mx{v}_j\| \end{equation}
as can be quickly verified from the definition.  As a norm on a finite dimensional vector space, it is equivalent to all other
norms, including the Hilbert--Schmidt norm.  The distortion in this comparison is as follows.

\begin{lemma} \label{l.norm.inequality} Let $V$ be an $n$-dimensional inner product space.  Then the Hilbert--Schmidt norm $\|\cdot\|_2$ and spectral norm $\|\cdot\|_\infty$ on $V^{\tensor m}$ satisfy
\[ n^{-m/2}\|T\|_2 \le \|T\|_\infty \le \|T\|_2, \qquad \forall\; T\in V^{\tensor m}. \]
\end{lemma}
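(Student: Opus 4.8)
The plan is to prove the two inequalities separately; both follow directly from the definition \eqref{e.spectral.norm} of the spectral norm together with the orthonormality of the tensor basis $\{\mx{e}_{\tensor\mx{j}}\colon\mx{j}\in[n]^m\}$. For the upper bound $\|T\|_\infty\le\|T\|_2$, I would note that every competitor $\tensor_{j=1}^m\mx{v}_j$ appearing in the maximum \eqref{e.spectral.norm}, with each $\|\mx{v}_j\|=1$, is itself a unit vector in the Hilbert--Schmidt norm: by the tensor-norm property \eqref{e.tensor.norm}, $\|\tensor_{j=1}^m\mx{v}_j\|_2=\prod_{j=1}^m\|\mx{v}_j\|=1$. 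Cauchy--Schwarz then gives $|\langle T,\tensor_{j=1}^m\mx{v}_j\rangle_2|\le\|T\|_2\,\|\tensor_{j=1}^m\mx{v}_j\|_2=\|T\|_2$, and taking the maximum over all such product vectors yields $\|T\|_\infty\le\|T\|_2$.

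For the lower bound, which is equivalent to $\|T\|_2\le n^{m/2}\|T\|_\infty$, I would expand $T$ in the orthonormal tensor basis, writing $T=\sum_{\mx{j}\in[n]^m}T_{\mx{j}}\,\mx{e}_{\tensor\mx{j}}$, so that $\|T\|_2^2=\sum_{\mx{j}\in[n]^m}|T_{\mx{j}}|^2$. The key observation is that each coefficient is a test of $T$ against a legitimate competitor in \eqref{e.spectral.norm}: since $\mx{e}_{\tensor\mx{j}}=\tensor_{k=1}^m\mx{e}_{j_k}$ is a tensor product of the unit basis vectors $\mx{e}_{j_k}$, we have $|T_{\mx{j}}|=|\langle T,\mx{e}_{\tensor\mx{j}}\rangle_2|\le\|T\|_\infty$ for every $\mx{j}\in[n]^m$. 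Summing these $\#([n]^m)=n^m$ squared estimates gives $\|T\|_2^2\le n^m\,\|T\|_\infty^2$, and taking square roots produces the claimed inequality $n^{-m/2}\|T\|_2\le\|T\|_\infty$.

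There is no serious obstacle here; the argument is a short consequence of the definitions. The only points requiring a moment of care are the conjugation convention in the Hilbert--Schmidt inner product (which is harmless, since only the moduli $|T_{\mx{j}}|=|\langle T,\mx{e}_{\tensor\mx{j}}\rangle_2|$ enter), and the recognition that the crude cardinality bound $\#([n]^m)=n^m$ is exactly what produces the factor $n^{-m/2}$. This constant is not expected to be sharp for general $T$ (for instance, when $m=2$ and $T$ is identified with a matrix, classical matrix-norm comparisons give a strictly better constant), but it is entirely adequate, and it is the form of the estimate that will be used in the subsequent net and concentration arguments.
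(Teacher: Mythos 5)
Your proposal is correct and follows essentially the same route as the paper: the upper bound via Cauchy--Schwarz with $\|\tensor_{j=1}^m\mx{v}_j\|_2=1$, and the lower bound by testing $T$ against the orthonormal product basis $\{\mx{e}_{\tensor\mx{j}}\}$ so that each $|T_{\mx{j}}|\le\|T\|_\infty$, with the cardinality $n^m$ supplying the constant. The only cosmetic difference is that the paper bounds $\|T\|_2^2\le n^m\max_{\mx{j}}|T_{\mx{j}}|^2$ before invoking the coefficient bound, while you sum the $n^m$ individual estimates directly --- the two steps are interchangeable.
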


\begin{proof} The Cauchy-Schwarz inequality implies that 
\[|\langle T,\tensor_{j=1}^m \mx{v}_j\rangle_2|\le \|T\|_2\|\tensor_{j=1}^m \mx{v}_j\|_2= \|T\|_2\prod_{j=1}^m \|\mx{v}_j\|.\] 
This yields the inequality  $\|T\|_\infty \le \|T\|_2$.  Equality holds if and only if $T$ is a rank one tensor.  Observe next
\[\|T\|_2^2=\sum_{\mx{j}\in[n]^m} |T_{\mx{j}}|^2 \le n^m \max\{|T_{\mx{j}}|^2:\mx{j}\in[n]^m\}. \]
Note that $|T_{\mx{j}}|^2 = |\langle T,\mx{e}_{\tensor\mx{j}}\rangle_2|^2 \le \|T\|_{\infty}^2$.  This yields the inequality $n^{-m/2}\|T\|_2 \le \|T\|_\infty$. 
\end{proof}

\begin{remark} The same argument cannot be applied in the symmetric case. The tensors $\mx{e}_{\tensor\mx{j}}$ are product states and so may be used in the computation of $\|T\|_{\infty}$, while the tensors $\hat{\mx{e}}_{\odot\mx{j}}$ are {\em not} product states in general.   The proof of Theorem \ref{t.max.entanglement} requires a much more involved argument. \end{remark}

If the tensor $T\in V^{\tensor m}$ happens to be symmetric, $T\in S^m(V)$, then the rank-$1$ tensor in \eqref{e.spectral.norm} can also be
taken in $S^m(V)$, which means it must be of the form $\mx{v}^{\odot m} = \mx{v}^{\tensor m}$ for some unit vector $\mx{v}$.  This is Banach's theorem.

\begin{theorem}[Banach's Theorem, \cite{Ban38}] \label{t.Banach} If $T\in S^m(V)$, then
\[ \|T\|_\infty = \max\{|\langle T,\mx{v}^{\tensor m}\rangle_2|\colon \|\mx{v}\|=1\}. \]
\end{theorem}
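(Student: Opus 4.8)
The plan is to establish the nontrivial inequality $\|T\|_\infty \le \max\{|\langle T,\mx{v}^{\tensor m}\rangle_2| : \|\mx{v}\|=1\}$; the reverse is immediate, since the rank-one symmetric tensors $\mx{v}^{\tensor m}$ form a subfamily of the product tensors $\tensor_{j=1}^m\mx{v}_j$ appearing in \eqref{e.spectral.norm}. Introduce the multilinear form $F(\mx{v}_1,\ldots,\mx{v}_m) = \langle T,\tensor_{j=1}^m\mx{v}_j\rangle_2$, which is linear in each argument and, because $T\in S^m(V)$, invariant under permuting its arguments. Writing $f(\mx{v}) = F(\mx{v},\ldots,\mx{v}) = \langle T,\mx{v}^{\tensor m}\rangle_2$ and $m_0 = \max_{\|\mx{v}\|=1}|f(\mx{v})|$, the claim becomes $\|T\|_\infty = \max_{\|\mx{v}_j\|=1}|F| \le m_0$. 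By compactness of the product of unit spheres I would fix a maximizer $(\mx{u}_1,\ldots,\mx{u}_m)$ and, after multiplying one argument by a unit scalar, assume $F(\mx{u}_1,\ldots,\mx{u}_m) = M := \|T\|_\infty \ge 0$ (if $M=0$ there is nothing to prove).

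The first step is a first-order optimality condition. Fixing all but the $j$-th slot, $\mx{w}\mapsto F(\mx{u}_1,\ldots,\mx{w},\ldots,\mx{u}_m)$ is linear, hence equals $\langle\mx{a}_j,\mx{w}\rangle$ for some $\mx{a}_j$ by Riesz representation. Its modulus is maximized over the unit sphere only at $\mx{a}_j/\|\mx{a}_j\|$ (up to phase), with value $\|\mx{a}_j\|$; since $(\mx{u}_1,\ldots,\mx{u}_m)$ is a global maximizer and $F(\mx{u}_1,\ldots,\mx{u}_m)=M$ is real and positive, this forces $\mx{a}_j=M\mx{u}_j$, i.e.
\[ F(\mx{u}_1,\ldots,\mx{w}_{(j)},\ldots,\mx{u}_m) = M\langle\mx{u}_j,\mx{w}\rangle \qquad\text{for all }\mx{w}\in V\text{ and each }j. \qquad (\ast) \]

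Next I would show the $\mx{u}_j$ are pairwise parallel. Fix a pair $i\ne j$, freeze the remaining slots, and let $C$ be the (complex symmetric, by permutation symmetry of $F$) matrix in an orthonormal basis of the bilinear form $B(\mx{x},\mx{y}) = F(\ldots\mx{x}_{(i)}\ldots\mx{y}_{(j)}\ldots)$. The case $m=2$ of the statement follows directly from the Autonne--Takagi factorization $C=U\Sigma U^T$: it gives $\max_{\|\mx{z}\|=1}|B(\mx{z},\mx{z})| = \max_{\|\mx{x}\|=\|\mx{y}\|=1}|B(\mx{x},\mx{y})| = \sigma_1(C)$, the largest singular value. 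Since this common value is squeezed between $|B(\mx{u}_i,\mx{u}_j)|=M$ and the global maximum $M$, we get $\sigma_1(C)=M$. Reading $(\ast)$ through $B$ yields $C\mx{u}_i = M\overline{\mx{u}_j}$ and $C\mx{u}_j = M\overline{\mx{u}_i}$, whence $C^\ast C\mx{u}_i = M^2\mx{u}_i$: both $\mx{u}_i$ and $\mx{u}_j$ are top singular vectors of $C$. When $\sigma_1(C)$ is a simple singular value this top space is a line, so $\mx{u}_i$ and $\mx{u}_j$ are parallel; doing this for every pair gives $\mx{u}_j=e^{i\phi_j}\mx{u}$ for a single unit vector $\mx{u}$, and then multilinearity gives $M=|F(\mx{u}_1,\ldots,\mx{u}_m)|=|f(\mx{u})|\le m_0$, as desired.

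The main obstacle is the degenerate case in which $\sigma_1(C)$ is a repeated singular value: then $\mx{u}_i$ and $\mx{u}_j$ need only share a higher-dimensional top singular subspace and may fail to be parallel. Takagi still produces a unit $\mx{z}$ with $B(\mx{z},\mx{z})=M$, so any two slots can be ``merged'' into a common vector, but merging one pair may destroy equalities already achieved, so naive iteration need not terminate. I would resolve this by continuity: both $T\mapsto\|T\|_\infty$ and $T\mapsto m_0(T)$ are continuous on $S^m(V)$, and they agree on the dense set of tensors whose extremal data is nondegenerate (where the argument above applies verbatim), hence agree everywhere. The one step demanding genuine care is verifying this density — equivalently, that for generic $T$ the multilinear maximizer is unique up to per-slot phases, which, by the permutation symmetry of $F$, already forces all $\mx{u}_j$ to be parallel.
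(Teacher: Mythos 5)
The paper gives no proof of this statement---Theorem \ref{t.Banach} is quoted directly from Banach \cite{Ban38}---so your argument must stand on its own; there is no internal proof to compare against. Most of what you wrote is correct. The reverse inequality is indeed immediate; the first-order condition $(\ast)$ follows correctly from Riesz representation plus the equality case of Cauchy--Schwarz (since $M>0$ forces $\mx{a}_j\neq 0$); the two-slot restriction $B$ is genuinely a complex \emph{symmetric} bilinear form, because $T$ is invariant under the transposition that fixes the frozen slots; the Autonne--Takagi factorization does give $\max_{\|\mx{z}\|=1}|B(\mx{z},\mx{z})|=\sigma_1(C)=\max_{\|\mx{x}\|=\|\mx{y}\|=1}|B(\mx{x},\mx{y})|$, which settles the case $m=2$ completely and unconditionally; and the computation $C\mx{u}_i=M\overline{\mx{u}_j}$, $C^{\ast}C\mx{u}_i=M^2\mx{u}_i$ (using $C^{\ast}=\overline{C}$ for symmetric $C$) is right. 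So in the nondegenerate situation, where every pair's merge matrix has simple top singular value, your conclusion that the $\mx{u}_j$ are pairwise parallel and hence $\|T\|_\infty\le m_0$ is sound.

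The genuine gap is exactly the step you flag and then defer: the density claim. As written, it carries the entire remaining weight of the theorem, and it is not a verification one can wave through. What you need is that symmetric tensors whose \emph{multilinear} maximizer is unique up to per-slot phases (or at least has all pairwise merge matrices with simple $\sigma_1$) are dense in $S^m(V)$. This is a generic-uniqueness statement for best rank-one approximations; results of this type exist (cf.\ Friedland--Stawiska \cite{FS13}, in this paper's own bibliography) but are themselves nontrivial semialgebraic/measure-theoretic theorems, and genericity in the full tensor space does not transfer, since $S^m(V)$ has measure zero in $V^{\tensor m}$: you need genericity \emph{within} the symmetric subspace, and for the multilinear problem, which does not follow from facts about the symmetric maximizer alone. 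Degeneracy is also not a vacuous worry: already for $m=2$, the tensor $T=\sum_i \mx{e}_i\tensor\mx{e}_i$ has $\langle T,\mx{u}\tensor\mx{v}\rangle_2=\mx{u}^T\mx{v}$, whose maximizers $(\mx{u},\zeta\overline{\mx{u}})$ form a positive-dimensional family of mostly non-diagonal pairs, with all singular values of $C$ equal; near such tensors (and their higher-$m$ analogues) your pairwise-parallel mechanism breaks, and only the unproven density rescues the argument. So the proposal, as it stands, proves the theorem only for $m=2$ and for nondegenerate $T$. Note that the Takagi merge itself needs no simplicity hypothesis, so every maximizer can be replaced by one with two equal slots; the classical complete proofs finish from there by a careful extremal selection of the maximizer that makes the merging process terminate, rather than by genericity---but, as your own termination worry shows, closing the degenerate case by either route requires a genuine additional idea, not merely ``care.''
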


\begin{example} \label{eg.1} We can explicitly compute the spectral norm of any unit symmetric basis tensor $\hat{\mx{e}}_{\odot\mx{i}}$ (as defined in \eqref{e.hat.e.i}).  First note that, if $\mx{i}=(i,\ldots,i)$ then $\hat{\mx{e}}_{\mx{i}}=\mx{e}_i^{\tensor m}$, and so $\|\hat{\mx{e}}_{\mx{i}}\|_\infty = \|\mx{e}_i\|^m = 1$ by \eqref{e.tensor.norm}.  By Lemma \ref{l.norm.inequality},
these basis elements have maximal spectral norm in the unit ball in $V^{\tensor m}$.

More generally, let $\mx{i}=(i_1,\ldots,i_m)$.  Before normalizing we have $\mx{e}_{\odot\mx{i}} = \frac{1}{m!}\sum_{\sigma\in\mathfrak{S}_m} \sigma\cdot \mx{e}_{\tensor\mx{i}}$.  Then
\[ \langle \mx{e}_{\odot\mx{i}},\mx{v}^{\tensor m}\rangle_2 = \frac{1}{m!}\sum_{\sigma\in\mathfrak{S}_m} \langle \sigma\cdot\mx{e}_{\tensor\mx{i}},\mx{v}^{\tensor m}\rangle_2 = \frac{1}{m!}\sum_{\sigma\in\mathfrak{S}_m} \prod_{k=1}^m \langle \mx{e}_{i_{\sigma^{-1}(k)}},\mx{v}\rangle = \frac{1}{m!}\sum_{\sigma\in\mathfrak{S}_m} \prod_{k=1}^m v_{i_{\sigma^{-1}(k)}} \]
where $v_i = \langle \mx{e}_i,\mx{v}\rangle$.  Since multiplication of complex numbers is commutative, all of the terms in this sum are equal, and so we simply have
\[ \langle \mx{e}_{\odot\mx{i}},\mx{v}^{\tensor m}\rangle_2 = v_{i_1}\cdots v_{i_m} \]
and so
\[ \|\mx{e}_{\odot\mx{i}}\|_{\infty} = \max\{|v_{i_1}\cdots v_{i_m}| \colon |v_1|^2+\cdots+|v_m|^2=1\}. \]
Computing this maximum is a matter of elementary calculus.  The result is as follows: if $\pi(\mx{i})$ is a partition
with blocks of sizes $m_1,\ldots,m_b>0$ (where $m_1+\cdots+m_b=m$), then
\begin{equation} \label{e.spec.norm.ei} \|\mx{e}_{\odot\mx{i}}\|_{\infty} = \sqrt{\frac{m_1^{m_1}\cdots m_b^{m_b}}{b^m}}. \end{equation}
The normalization coefficient $c(\mx{i})$ in this case is $c(\mx{i}) = \frac{m!}{m_1!\cdots m_b!}$ (cf. \eqref{e.c(i)}), and so
\begin{equation} \label{e.spec.norm.ei2} \|\hat{\mx{e}}_{\odot\mx{i}}\|_{\infty} = \sqrt{c(\mx{i})}\|\mx{e}_{\odot\mx{i}}\|_{\infty} = \sqrt{\frac{m!}{m^m}}\cdot\prod_{j=1}^b \sqrt{\frac{m_j^{m_j}}{m_j!}}. \end{equation}
\end{example}

\subsection{Geometric Measure of Entanglement\label{section geometric measure}}

Let $\mathscr{P}_1(V^{\tensor n})$ denote the set of (unit-length) product states:
\[ \mathscr{P}_1(V^{\tensor n}) = \{\mx{v}_1\tensor\cdots\tensor \mx{v}_n\colon \|\mx{v}_1\|=\cdots=\|\mx{v}_n\|=1\}. \]
The {\em geometric measure of entanglement} of a (unit length) tensor $T\in V^{\tensor n}$ can be defined to be the distance from $T$ to $\mathscr{P}_1(V^{\tensor n})$,
\[ \inf_{S\in\mathscr{P}_1(V^{\tensor n})} \|S-T\|_2. \]
This quantity is $0$ iff $T$ is a product state, thus capturing how entangled $T$ is.  Squaring it and expanding, and using phase invariance, one sees that it can be expressed easily in terms of the spectral norm:
\[ \inf_{S\in\mathscr{P}_1(V^{\tensor n})} \|S-T\|_2^2 = 2-2\|T\|_{\infty}. \]
In particular, we see that $\|T\|_{\infty}=1$ if and only if $T$ is a product state.  Gross et.\ al.\ \cite{GFE09} dispensed with the distance measure and instead redefined the geometric measure of entanglement as
\[ E(T) = -2\log_2\|T\|_{\infty}. \]
The $\log_2$ makes sense from an entropy point of view, and restores the property that $E(T)=0$ if and only if $T$ is a product state.  Lemma \ref{l.norm.inequality} then shows that, for a unit-length tensor $T$,
\[ 0\le E(T) \le (\log_2 n)m. \]
In particular, for $m$ qubit states, the possible range of the geometric measure of entanglement is $[0,m]$.

\begin{example} \label{eg.2} Continuing Example \ref{eg.1}, let $\mx{i}$ be a multi-index whose partition $\pi(\mx{i})$ has blocks of sizes $m_1,\ldots,m_b$, where $m_1+\cdots+m_b=m$.  Using Stirling's approximation with \eqref{e.spec.norm.ei} yields
\begin{equation} \label{e.eg.2} E(\hat{\mx{e}}_{\odot\mx{i}}) = \frac12\left(\sum_{j=1}^b \log_2 m_j - \log_2 m\right) + O(b) \end{equation}
where the $O(b)$ constant is in the interval $[b\log_2\sqrt{2\pi}-\log_2 e,b\log_2e - \log_2\sqrt{2\pi}]\approx[1.32b-1.45,1.45b-1.32]$.  Consider the two extreme cases: when $b=1$ (so all indices in $\mx{i}$ are equal) and when $b=m$ (so all indices in $\mx{i}$ are distinct).  In the former case, $\hat{\mx{e}}_{\odot\mx{i}}$ is a product state and $E(\hat{\mx{e}}_{\odot\mx{i}})=0$; in the latter case, each $m_j=1$ and so $E(\hat{\mx{e}}_{\odot\mx{i}}) = (\log_2 e)m-\frac12\log_2 m +O(1)$ (compare to the maximum possible value $(\log_2 n)m$, where in this case $n\ge m$).  Generally speaking, the fewer coincidences among the indices of $\mx{i}$, the greater the entanglement of $\hat{\mx{e}}_{\odot\mx{i}}$.

In the case of $m$ qubit Bosons, the basis states are $\hat{\mx{e}}_{(j,m-j)}$; except for the pure states with $j=0,m$, we have $b=2$.
Then \eqref{e.eg.2} yields $E(\hat{\mx{e}}_{(j,m-j)}) = \frac12\log_2(j(m-j))-\log_2m +O(1)$.  This is maximized at $j=\frac{m}{2}$, yielding the precise estimate $E(\hat{\mx{e}}_{(j,m-j)})\le \frac12\log_2 m +c$ where $0.20<c<0.56$; this is a factor of $2$ smaller than the typical value; cf.\ Theorem \ref{t.main}.
\end{example}

\subsection{Boson Quantum States\label{section Bosons}}

Quantum states in an $m$-partite system are not exactly given by unit-length tensors in $V^{\tensor n}$.  Two tensors that are equal up to a complex phase factor represent the same quantum state.  This is true in the symmetric case as well.  The set of {\bf Boson quantum states} $\mathcal{B}^m(V)$ is the quotient of the unit sphere $S^m_1(V)$ in $S^m(V)$ by the relation $T\sim \zeta T$ for all $\zeta$ in the unit circle in $\C$:
\[ \mathcal{B}^m(V) = \{T\in S^m_1(V)\colon T\sim \zeta T \; \forall \, |\zeta|=1\}. \]
We generally denote Boson states in $\mathcal{B}^m(V)$ using upper-case Greek letters $\Psi$ and $\Phi$, and reserve $T$ for tensors (not modding out by phase factors).
 
Note that the spectral norm \eqref{e.spectral.norm} is invariant under multiplication by a complex phase, and so it descends to $\mathcal{B}^m(V)$; similarly, the geometric measure of entanglement also descends to $\mathcal{B}^m(V)$. Nonetheless, we must be a little careful treating $\mathcal{B}^m(V)$ as a metric space, since the distance between two {\em distinct} quantum states is not well-defined in terms of the distance between two representative tensors (that can each be multiplied by independent phase factors).  We therefore define, for two states,
\begin{equation} \label{e.def.metric} \|\Psi-\Phi\|_2:= \min\{\|S-T\|_2\colon S\in\Psi, T\in\Phi\}. \end{equation}
That is, fixing any two representative tensors $S_0\in\Psi$ and $T_0\in\Phi$,
\[ \|\Psi-\Phi\|_2=\min_{|\zeta|=|\eta|=1}\|\eta S_0 - \zeta T_0\|_2 = \min_{|\zeta|=1} \|S_0-\zeta T_0\|_2 \]
It is straightforward to check that this makes $\mathcal{B}^m(V)$ into a compact metric space; the distance function evidently satisfies the triangle inequality, and it yields $0$ if and only if $S_0=\zeta T_0$ for some $|\zeta|=1$, which is precisely to say that $\Phi=\Psi$.

We now introduce the Haar probability measure on $\mathcal{B}^m(V)$ we use in Theorem \ref{t.main}.  The Hilbert--Schmidt inner product on $S^m(V)$ identifies $S_1^m(V)$ with a unit sphere.  To be definite, we use the orthonormal basis $\{\hat{\mx{e}}_{\odot\mx{i}}\colon\mx{i}\in[n]^{\uparrow m}\}$ to identify $S_1^m(V)$ isometrically with the unit sphere in $\C^{[n]^{\uparrow m}}$:
\[ \C^{[n]^{\uparrow m}}\ni (z_{\mx{i}})\mapsto \sum_{\mx{i}\in[n]^{\uparrow m}} z_{\mx{i}} \hat{\mx{e}}_{\odot\mx{i}}. \]
Letting $d_{n,m} = \#[n]^{\uparrow m} = \binom{n+m-1}{m}$ \eqref{e.dimension}, this means $S_1^m(V)$ is isometrically isomorphic to the sphere $\mathbb{S}^{2d_{n,m}-1}$.  Since the uniform measure on the sphere is invariant under rotations, any other identification (i.e.\ choice of orthonormal basis) would yield the same measure.

Thus, there is a surjective linear map
\[ \mathbb{S}^{2d_{n,m}-1}\to S_1^m(V) \to \mathcal{B}^m(V) \]
given by composing the above isomorphism with the natural projection map $S^m_1(V)\to \mathcal{B}^m(V)$.
We refer to the push forward of the uniform probability measure on $\mathbb{S}^{2d_{n,m}-1}$ to $\mathcal{B}^m(V)$ as the {\em Haar measure on Bosons}.  Note: the uniform measure on $\mathbb{S}^{2d_{n,m}-1}$ is invariant under the map $T\mapsto\zeta T$ for any phase $\zeta$, and so the Haar measure on $\mathcal{B}^m(V)$ is essentially the same as the Haar measure on $S_1^m(V)$.


\section{Proofs of the Theorems}

\subsection{Proof of Theorem \ref{t.max.entanglement}}


Since $E(\Psi)\ge 0$ for all $\Psi$, our goal is to prove that $-2\log_2\|\Psi\|_{\infty} \le \log_2 \binom{n+m-1}{m}$, or equivalently
\begin{equation} \label{e.E.lower.bound} \|\Psi\|_{\infty}^2 \ge \frac{1}{\binom{n+m-1}{m}}, \qquad \forall \,\Psi\in \mathcal{B}^m(\C^n). \end{equation}
Since the spectral norm is invariant under multiplication by a phase, we may work directly with symmetric tensors $T$ (instead of Boson quantum states $\Psi$ in the quotient space).

The key result needed is the following invariance statement, which can be found as \cite[Lemma 4.3.1]{Ren05}.

\begin{proposition} \label{p.Schur} Let $\mathbb{S}^1(\C^n)$ denote the sphere $\{\mx{v}\in\C^n\colon \|\mx{v}\|=1\}$ in $\C^n$.  Let $P_{\mx{v}}$ denote the orthogonal projection operator from $\C^n$ onto the span of $\mx{v}$; in physics notation, $P_{\mx{v}} = |\mx{v}\rangle\langle \mx{v}|$.  Then for any symmetric tensor $T\in S^m(\C^n)$,
\begin{equation} \label{e.Schur} \int_{\mathbb{S}^1(\C^n)} P_{\mx{v}}^{\tensor m}(T)\,d\mx{v} = \textstyle{\binom{n+m-1}{m}}^{-1}T \end{equation}
where the integral is taken with respect to the Haar probability measure on $\mathbb{S}^1(\C^n)$.
\end{proposition}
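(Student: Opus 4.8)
The plan is to recognize the left-hand side as an average of rank-one orthogonal projections and to exploit the invariance of the whole construction under the unitary group. First I would observe that $P_{\mx{v}}^{\tensor m}$ is itself the orthogonal projection onto the line spanned by $\mx{v}^{\tensor m}$; since $\|\mx{v}\|=1$ gives $\langle \mx{v}^{\tensor m},\mx{v}^{\tensor m}\rangle_2 = \|\mx{v}\|^{2m} = 1$ by the definition of the Hilbert--Schmidt inner product, each $P_{\mx{v}}^{\tensor m}$ is a rank-one orthogonal projection onto a unit vector lying in $S^m(V)$. Writing $A = \int_{\mathbb{S}^1(\C^n)} P_{\mx{v}}^{\tensor m}\, d\mx{v}$, I would record two structural facts: the range of $A$ lies in $S^m(V)$, and $A$ annihilates $S^m(V)^\perp$, because $\langle \mx{v}^{\tensor m}, T\rangle_2 = 0$ for every $\mx{v}$ whenever $T\perp S^m(V)$. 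Thus, relative to the orthogonal decomposition $V^{\tensor m} = S^m(V)^\perp\oplus S^m(V)$, we have $A = 0\oplus A|_{S^m(V)}$, and it suffices to identify the restriction $A|_{S^m(V)}$.

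The central idea is equivariance. For any unitary $U$ on $\C^n$, the operator $U^{\tensor m}$ is unitary on $V^{\tensor m}$ and satisfies $U^{\tensor m}\mx{v}^{\tensor m} = (U\mx{v})^{\tensor m}$; conjugating a rank-one projection therefore gives $U^{\tensor m} P_{\mx{v}}^{\tensor m}(U^{\tensor m})^{-1} = P_{U\mx{v}}^{\tensor m}$. Because the Haar measure on $\mathbb{S}^1(\C^n)$ is invariant under $\mx{v}\mapsto U\mx{v}$, the change of variables yields $U^{\tensor m} A (U^{\tensor m})^{-1} = A$; that is, $A$ commutes with $U^{\tensor m}$ for every unitary $U$. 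Restricting to $S^m(V)$, which is an irreducible representation of the unitary group $\mathrm{U}(n)$ (the symmetric power of the standard representation), Schur's Lemma forces $A|_{S^m(V)} = \lambda\,\mathrm{Id}$ for some scalar $\lambda$, which is precisely the assertion that $A(T) = \lambda T$ for all symmetric $T$.

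It remains to compute $\lambda$, which I would do by taking a trace. Since $A$ vanishes on $S^m(V)^\perp$, its trace over all of $V^{\tensor m}$ equals the trace of $A|_{S^m(V)} = \lambda\,\mathrm{Id}$, namely $\lambda\cdot\dim S^m(V) = \lambda\, d_{n,m}$. On the other hand, linearity of the trace and of the integral gives $\mathrm{Tr}(A) = \int_{\mathbb{S}^1(\C^n)} \mathrm{Tr}(P_{\mx{v}}^{\tensor m})\, d\mx{v} = \int_{\mathbb{S}^1(\C^n)} 1\, d\mx{v} = 1$, because each $P_{\mx{v}}^{\tensor m}$ is a rank-one projection and the Haar measure is a probability measure. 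Equating the two expressions yields $\lambda = d_{n,m}^{-1} = \binom{n+m-1}{m}^{-1}$, as claimed.

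The step that genuinely requires care—and the main obstacle—is the irreducibility of $S^m(V)$ as a $\mathrm{U}(n)$-representation, which is what makes Schur's Lemma applicable. This is classical (the symmetric power $S^m(\C^n)$ is the irreducible highest-weight representation with highest weight $(m,0,\ldots,0)$, and irreducible polynomial representations of $GL_n(\C)$ restrict to irreducible unitary representations of $\mathrm{U}(n)$ by Weyl's unitary trick), but a self-contained treatment must either cite it or verify it directly. Should one wish to bypass representation theory entirely, the alternative is a direct evaluation of the coordinate moments $\int \overline{v_{j_1}\cdots v_{j_m}}\, v_{k_1}\cdots v_{k_m}\, d\mx{v}$ over the sphere, which vanish unless $\uparrow\!(\mx{j}) = \uparrow\!(\mx{k})$; bookkeeping the resulting multinomial factors reproduces the constant $\binom{n+m-1}{m}^{-1}$, but this route is considerably more laborious than the symmetry argument above.
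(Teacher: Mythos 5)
Your proposal is correct and takes essentially the same route as the paper's own proof: average the rank-one projections, show the result commutes with the $\mathrm{U}(n)$-action, invoke irreducibility of $S^m(\C^n)$ (via Schur--Weyl/highest-weight theory plus the unitary restriction) and Schur's lemma, then fix the scalar by the trace computation $\mathrm{Tr}(P_{\mx{v}}^{\tensor m})=1$. Your only deviation is a mild streamlining: by conjugating at the operator level, $U^{\tensor m}P_{\mx{v}}^{\tensor m}(U^{\tensor m})^{-1}=P_{U\mx{v}}^{\tensor m}$, you get the intertwining property directly on all of $V^{\tensor m}$, whereas the paper verifies it pointwise on rank-one tensors $\mx{w}^{\tensor m}$ and then extends by the rank-one decomposition of symmetric tensors (Proposition \ref{p.decomp.symmetric.rank1}), a step your argument does not need.
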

In \cite{Ren05}, this is proved by direct calculation.  We give an independent proof here that is based on representation theory.  We first need the following (well-known) lemma.

\begin{lemma} \label{l.irreducible} Let $\rho_m\colon\mathrm{U}(n)\to \mathrm{GL}(S^m(\C^n))$ denote the complex representation given by
\[ \rho_m(U)(\mx{v}_1\odot\cdots\odot \mx{v}_m) = (U\mx{v}_1)\odot\cdots\odot(U\mx{v}_m). \]
Then $\rho_m$ is irreducible. \end{lemma}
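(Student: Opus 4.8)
The plan is to prove irreducibility in the standard representation-theoretic way: show that every nonzero $\rho_m$-invariant subspace $W\subseteq S^m(\C^n)$ is all of $S^m(\C^n)$. The argument has two ingredients. First I would restrict the action to the maximal torus of diagonal unitaries to force $W$ to be a \emph{coordinate} subspace, i.e.\ a span of some of the basis vectors $\mx{e}_{\odot\mx{i}}$. Then I would use the infinitesimal action of off-diagonal matrices to show that as soon as $W$ contains one basis vector it must contain all of them.

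For the first step, fix a diagonal $D = \mathrm{diag}(e^{i\theta_1},\ldots,e^{i\theta_n})\in\mathrm{U}(n)$. For $\mx{i}\in[n]^{\uparrow m}$, let $k_j = \#\{\ell\colon i_\ell = j\}$ denote the multiplicity of $j$ in $\mx{i}$, so $\sum_j k_j = m$. Straight from the definition of $\rho_m$,
\[ \rho_m(D)\,\mx{e}_{\odot\mx{i}} = e^{i(k_1\theta_1+\cdots+k_n\theta_n)}\,\mx{e}_{\odot\mx{i}}, \]
so each $\mx{e}_{\odot\mx{i}}$ is a simultaneous eigenvector for the torus with weight $(k_1,\ldots,k_n)$. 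The key structural fact is that $\mx{i}\mapsto(k_1,\ldots,k_n)$ is injective on $[n]^{\uparrow m}$, since a nondecreasing tuple is determined by its multiplicities; hence distinct basis vectors have distinct weights and every weight space is one-dimensional. As $W$ is torus-invariant, it is the sum of the (one-dimensional) weight spaces it meets, so $W=\mathrm{span}\{\mx{e}_{\odot\mx{i}}\colon\mx{i}\in B\}$ for some nonempty $B\subseteq[n]^{\uparrow m}$.

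For the second step I would pass to the Lie algebra. Invariance of $W$ under $\mathrm{U}(n)$, differentiated along one-parameter subgroups, gives $d\rho_m(X)W\subseteq W$ for all $X\in\u(n)$; since $W$ is a complex subspace and $\gl(n,\C)=\u(n)\oplus i\,\u(n)$, this invariance extends automatically to $d\rho_m(Y)W\subseteq W$ for all $Y\in\gl(n,\C)$. Applying this to the elementary matrix $E_{ab}$ with $a\ne b$, and using the derivation formula $d\rho_m(E_{ab})(\mx{v}_1\odot\cdots\odot\mx{v}_m)=\sum_\ell \mx{v}_1\odot\cdots\odot(E_{ab}\mx{v}_\ell)\odot\cdots\odot\mx{v}_m$ together with $E_{ab}\mx{e}_c=\delta_{bc}\mx{e}_a$, a short computation yields
\[ d\rho_m(E_{ab})\,\mx{e}_{\odot\mx{i}} = k_b\,\mx{e}_{\odot\mx{i}'}, \]
where $\mx{i}'$ is obtained from $\mx{i}$ by turning one entry equal to $b$ into an $a$; the coefficient $k_b$ is nonzero exactly when $\mx{i}$ has an entry equal to $b$. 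Thus $E_{ab}$ carries the basis vector for $\mx{i}$ to a nonzero multiple of the basis vector for $\mx{i}'$.

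Finally I would use these operators to connect all basis vectors. Starting from any $\mx{e}_{\odot\mx{i}_0}\in W$, repeatedly applying the operators $d\rho_m(E_{1,b})$ moves every index up to the value $1$, producing a nonzero multiple of $\mx{e}_1^{\tensor m}$, so $\mx{e}_1^{\tensor m}\in W$; conversely, from $\mx{e}_1^{\tensor m}$ the operators $d\rho_m(E_{a,1})$ build up an arbitrary multiplicity vector, reaching a nonzero multiple of every $\mx{e}_{\odot\mx{i}}$. Since each intermediate coefficient is nonzero and $W$ is stable under all these operators, every basis vector lies in $W$, whence $W=S^m(\C^n)$ and $\rho_m$ is irreducible. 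I expect the only genuinely delicate point to be the complexification step, namely justifying that invariance under the compact group $\mathrm{U}(n)$ upgrades to invariance under the non-skew-Hermitian operators $d\rho_m(E_{ab})$; the remaining connectivity argument requires only the mild bookkeeping of checking that the multiplicity at the moved index stays positive at each step.
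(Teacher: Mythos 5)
Your proof is correct, and it takes a genuinely different route from the paper's. The paper proves essentially nothing by hand: it extends $\rho_m$ to $\mathrm{GL}(n,\C)$, cites Schur--Weyl duality (Fulton--Harris) for the irreducibility of the $m$-th symmetric power of the standard representation of $\mathrm{GL}(n,\C)$, and then restricts back to $\mathrm{U}(n)$ via the complexification $\gl(n,\C)=\u(n)\oplus i\,\u(n)$. You instead give a self-contained weight argument: the torus weight of $\mx{e}_{\odot\mx{i}}$ is its multiplicity vector $(k_1,\ldots,k_n)$, these are pairwise distinct over $\mx{i}\in[n]^{\uparrow m}$, so every invariant subspace is spanned by basis vectors; the operators $d\rho_m(E_{ab})$, with your (correct) formula $d\rho_m(E_{ab})\,\mx{e}_{\odot\mx{i}}=k_b\,\mx{e}_{\odot\mx{i}'}$, then connect any basis vector to $\mx{e}_1^{\tensor m}$ and back to all others, with every coefficient positive (your count that the multiplicity of the moved value stays $\ge 1$ before each conversion checks out). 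Amusingly, both proofs pivot on the same fact $\gl(n,\C)=\u(n)\oplus i\,\u(n)$, used in opposite directions: the paper uses it to descend irreducibility from $\mathrm{GL}(n,\C)$ to $\mathrm{U}(n)$, while you use it to upgrade $\mathrm{U}(n)$-invariance of $W$ to $\gl(n,\C)$-invariance --- legitimate here because the derived representation acts by derivations, a formula visibly $\C$-linear in the Lie-algebra variable, and $W$ is a complex subspace. Your approach buys independence from the Schur--Weyl citation and an explicit picture of the weight structure (all weight spaces one-dimensional, indexed by multiplicity vectors); the paper's buys brevity and scope, since its two-line restriction argument applies verbatim to any irreducible holomorphic representation of $\mathrm{GL}(n,\C)$, not just $S^m(\C^n)$. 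In a polished write-up you should spell out the two routine facts you gesture at: that a torus-invariant subspace decomposes into the weight spaces it meets (average $\overline{\chi(D)}\,\rho_m(D)w$ over the diagonal torus against each character $\chi$, valid since $W$ is finite-dimensional hence closed), and that differentiating $t\mapsto\rho_m(e^{tX})w\in W$ at $t=0$ lands in $W$ for the same closedness reason.
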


\begin{proof} First, note that $\rho_m$ extends (by the same formula) to a complex representation of $\mathrm{GL}(n,\C)$.  It is one of the statements of the Schur--Weyl duality that this representation---the $m$-fold symmetric tensor power of the standard representation of $\mathrm{GL}(n,\C)$---is irreducible; see, for example, \cite[Theorem 6.3(4)]{Fulton-Harris}.

To conclude the proof, we note that any irreducible complex representation of $\mathrm{GL}(n,\C)$ restricts to an irreducible representation of $\mathrm{U}(n)$.  Indeed, since both groups are connected, irreduciblity of a group representation $\rho$ is equivalent to irreducibility of the associated Lie algebra representation $d\rho$.  But the Lie algebra $\mathfrak{gl}(n,\C)$ of $\mathrm{GL}(n,\C)$ is the complexification of the Lie algebra $\mathfrak{u}(n)$ of $\mathrm{U}(n)$: $\mathfrak{gl}(n,\C) = \mathfrak{u}(n)\oplus i\mathfrak{u}(n)$.  Hence, any complex subspace invariant under $d\rho|_{\mathfrak{u}(n)}$ is automatically invariant under $d\rho$.  \end{proof}

\begin{proof}[Proof of Proposition \ref{p.Schur}] For $T\in S^m(\C^n)$, let $\mathcal{R}(T)$ denote the left-hand-side of \eqref{e.Schur}; so $\mathcal{R}\in\mathrm{End}(S^m(\C^n))$.  We will show that $\mathcal{R}$ is a constant multiple of the identity on $S^m(\C^n)$.

Let $\rho_m\colon\mathrm{U}(n)\to \mathrm{GL}(S^m(\C^n))$ be the irreducible representation in Lemma \ref{l.irreducible}. It is a simple matter to compute that, for $\mx{v},\mx{w}\in\C^n$ and $U\in\mathrm{U}(n)$,
\[ P_{\mx{v}}^{\tensor m}\rho_m(U)(\mx{w}^{\tensor m}) = \langle \mx{v},U\mx{w}\rangle^m \mx{v}^{\tensor m} = \langle U^{\ast}\mx{v},\mx{w}\rangle^m \mx{v}^{\tensor m}.  \]
Now using the unitary invariance of the Haar probability measure, we have
\[ \mathcal{R}\rho_m(U)(\mx{w}^{\tensor m}) = \int_{\mathbb{S}^1(\C^n)} \langle U^\ast\mx{v},\mx{w}\rangle^m\mx{v}^{\tensor m}\,d\mx{v} = \int_{\mathbb{S}^1(\C^n)} \langle \mx{v},\mx{w}\rangle^m(U\mx{v})^{\tensor m}\,d\mx{v}.  \]
On the other hand, note that
\[ \rho_m(U)P_{\mx{v}}^{\tensor m}(\mx{w}^{\tensor m}) = \langle \mx{v},\mx{w}\rangle^m \rho_m(U)(\mx{v}^{\tensor m})  = \langle \mx{v},\mx{w}\rangle^m (U\mx{v})^{\tensor m}. \]
Passing the integral through the linear operator $\rho_m(U)$, we conclude that $\mathcal{R}\rho_m(U)(\mx{w}^{\tensor m}) = \rho_m(U)\mathcal{R}(\mx{w}^{\tensor m})$ for each $U\in\mathbb{U}(n)$.  By Proposition \ref{p.decomp.symmetric.rank1}, any $T\in S^m(\C^n)$ is a linear combination of rank $1$ tensors of the form $\mx{w}^{\tensor m}$ for some $\mx{w}\in\C^n$; thus, we conclude that the irreducible representation $\rho_m$ commutes with $\mathcal{R}$ on $S^m(\C^n)$.  By Schur's lemma, it follows that $\mathcal{R}=c\cdot \mathrm{Id}_{S^m(\C^n)}$ for some constant $c\in\C$.

To compute the constant $c$, we follow \cite{Ren05} and use the fact that $P_{\mx{v}}$ is a rank-$1$ projection, hence has trace $1$.  Thus $\mathrm{Tr}(P_{\mx{v}}^{\tensor m}) = \mathrm{Tr}(P_{\mx{v}})^m=1$, and so
\[ \mathrm{Tr}(\mathcal{R}) = \int_{\mathbb{S}^1(\C^n)} \mathrm{Tr}(P_{\mx{v}})^m\,d\mx{v} = \int_{\mathbb{S}^1(\C^n)} d\mx{v} = 1. \]
Thus $1 = \mathrm{Tr}(\mathcal{R}) = \mathrm{Tr}(c\cdot \mathrm{Id}_{S^m(\C^n)}) = c\cdot \mathrm{dim}_{\C}(S^m(\C^n))$.  The result now follows from \eqref{e.dimension}. \end{proof}

\begin{remark} It is important to note that the preceding proof fundamentally requires the underlying vector space to be complex: both for the use of the Schur--Weyl duality in Lemma \ref{l.irreducible}, and the use of Schur's lemma in Proposition \ref{p.Schur}.  In fact, the result is {\em not true} in the real setting.  This can be seen by giving a more direct computational proof of the proposition, computing the integral in spherical coordinates.  In that case, in the $\R^n$ setting, the integral $\mathcal{R}(T)$ is equal to $cT$ plus $\lfloor n/2 \rfloor$ additional lower-order terms: contractions of $T$ with strictly positive coefficients.  \end{remark}

\begin{proof}[Proof of Theorem \ref{t.max.entanglement}] We proceed to prove \eqref{e.E.lower.bound}; as discussed above, this suffices to prove Theorem \ref{t.max.entanglement}.  Also as noted above, it suffices to work with symmetric tensors $T\in S^m_1(\C^n)$, rather than Boson quantum states $\Psi\in\mathcal{B}^m(\C^n)$.

Let $T\in S^m_1(\C^n)$ be a symmetric tensor, with length $\langle T,T\rangle_2=1$.  Applying Proposition \ref{p.Schur}, taking inner products with $T$, we have
\begin{equation} \label{e.E.l.b.1} \int_{\mathbb{S}^1(\C^n)} \langle T,P_{\mx{v}}^{\tensor m}(T)\rangle_2\,d\mx{v} = \langle \mathcal{R}(T),T\rangle_2 = \textstyle{\binom{n+m-1}{m}}^{-1}\langle T,T\rangle_2 = \textstyle{\binom{n+m-1}{m}}^{-1}. \end{equation}
Now, let us compute the integrand.  By Proposition \ref{p.decomp.symmetric.rank1}, decompose $T = \sum_{j=1}^r \mx{w}_j^{\tensor m}$ for some vectors $\mx{w}_j\in\C^n$.  Then
\[ \langle T,P_{\mx{v}}^{\tensor m}(T)\rangle_2 = \sum_{j,k=1}^r \left\langle \mx{w}_j^{\tensor m}, \langle \mx{v},\mx{w}_k\rangle^m \mx{v}^{\tensor m}\right\rangle_2 = \sum_{j,k=1}^r \langle \mx{v}^{\tensor m},\mx{w}_k^{\tensor m}\rangle_2 \langle \mx{w}_j^{\tensor m}, \mx{v}^{\tensor m}\rangle_2.  \]
Distributing the sums inside the inner products, we therefore have
\[ \langle T,P_{\mx{v}}^{\tensor m}(T)\rangle_2 = \langle \mx{v}^{\tensor m},T\rangle_2\langle T,\mx{v}^{\tensor m}\rangle_2 = |\langle T,\mx{v}^{\tensor m}\rangle_2|^2. \]
Hence, \eqref{e.E.l.b.1} shows that the average value of the function $\mx{v}\mapsto |\langle T,\mx{v}^{\tensor m}\rangle_2|^2$ on the sphere is $\binom{n+m-1}{m}^{-1}$.  This is a continuous function, and hence by the mean value theorem for integrals, we conclude (from Definition \ref{e.spectral.norm}) that
\[ \|T\|_{\infty}^2 = \max_{\mx{v}\in\mathbb{S}^1(\C^n)} |\langle T,\mx{v}^{\tensor m}\rangle_2|^2 \ge \int_{\mathbb{S}^1(\C^n)} |\langle T,\mx{v}^{\tensor m}\rangle_2|^2\,d\mx{v} = \textstyle{\binom{n+m-1}{m}^{-1}}. \]
This holds true for every unit length symmetric tensor $T$, establishing the validity of \eqref{e.E.lower.bound}, and concluding the proof.  \end{proof}

\subsection{$\epsilon$-Nets on Boson States}

The proof of our main Theorem \ref{t.main} has two ingredients.  The first is a bound on the size of an $\e$-net for the Boson sphere $\mathcal{B}^1(\C^n)$.

\begin{definition} \label{d.e-net} Let $\mathcal{X}$ be a compact metric space, and let $\e>0$.  An {\bf $\e$-net} for $\mathcal{X}$ is a finite subset $\mathcal{N}\subseteq\mathcal{X}$ with the property that, for every $x\in\mathcal{X}$, there is a point $y\in\mathcal{N}$ with $d(x,y)<\e/2$; i.e.\ $\mathcal{X}$ is covered by $\e/2$-balls centered at points of $\mathcal{N}$. \end{definition}

\begin{lemma} \label{l.net} Let $0<\e<1$, and $n\in\N$.  The metric space $\mathcal{B}^1(\C^n)$ possesses an $\e$-net $\mathcal{N}(\e,n)$ with cardinality $\le K_n/\e^{2(n-1)}$, where $K_n\le 2^{n+1}n^n$.
\end{lemma}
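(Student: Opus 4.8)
The plan is to realize $\mathcal{B}^1(\C^n)$ concretely and then net it by a packing (volume) argument against the Haar measure, rather than by pulling back a net from the ambient sphere. Since $S^1(\C^n)=\C^n$, the space $\mathcal{B}^1(\C^n)$ is just the unit sphere $\mathbb{S}^1(\C^n)$ modulo phase, a complex projective space of real dimension $2(n-1)$. For representatives $\mx{v}\in\Psi$ and $\mx{w}\in\Phi$, the metric \eqref{e.def.metric} unwinds, after minimizing over the phase, to
\[ \|\Psi-\Phi\|_2^2 = \min_{|\zeta|=1}\|\mx{v}-\zeta\mx{w}\|^2 = 2-2|\langle\mx{v},\mx{w}\rangle|. \]
The exponent $2(n-1)$ in the target bound (as opposed to the naive $2n-1$) is exactly the dimension of this quotient; this is the signal that one should measure balls intrinsically on $\mathcal{B}^1(\C^n)$ and not on $\mathbb{S}^1(\C^n)$.

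First I would take $\mathcal{N}=\mathcal{N}(\e,n)$ to be a maximal $(\e/2)$-separated subset of $\mathcal{B}^1(\C^n)$, i.e.\ any two distinct points are at distance $\ge\e/2$; such a set is finite by compactness. By maximality it is automatically an $\e$-net in the sense of Definition \ref{d.e-net}: any point at distance $\ge\e/2$ from all of $\mathcal{N}$ could be adjoined, a contradiction. The separation also forces the open balls $B(\Psi,\e/4)$, $\Psi\in\mathcal{N}$, to be pairwise disjoint by the triangle inequality. Letting $\mathbb{P}$ denote the Haar probability measure on $\mathcal{B}^1(\C^n)$ (the pushforward of the uniform measure on $\mathbb{S}^1(\C^n)$), disjointness yields
\[ |\mathcal{N}|\cdot\min_{\Psi}\mathbb{P}\big(B(\Psi,\e/4)\big) \le \sum_{\Psi\in\mathcal{N}}\mathbb{P}\big(B(\Psi,\e/4)\big) \le 1. \]
Because $\mathbb{P}$ is invariant under the transitive $\mathrm{U}(n)$-action on $\mathcal{B}^1(\C^n)$, the measure of a ball is independent of its center, so it remains only to compute $\mathbb{P}(B(\Psi,\e/4))$ for a single $\Psi$ and invert.

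The main computational step is this small-ball (spherical cap) volume. Fixing a representative $\mx{v}$ of $\Psi$ and letting $\mx{w}$ be Haar-uniform on $\mathbb{S}^1(\C^n)$, the point $\Phi=[\mx{w}]$ lies in $B(\Psi,\e/4)$ iff $|\langle\mx{v},\mx{w}\rangle|^2>(1-\e^2/32)^2$. The key distributional fact, which I would record separately, is that for a uniform complex unit vector the overlap $|\langle\mx{v},\mx{w}\rangle|^2$ has a $\mathrm{Beta}(1,n-1)$ law, so that $\mathbb{P}(|\langle\mx{v},\mx{w}\rangle|^2>s)=(1-s)^{n-1}$ for $s\in[0,1]$. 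Taking $s=(1-\e^2/32)^2$ gives $1-s=(\e^2/16)(1-\e^2/64)$, whence
\[ \mathbb{P}\big(B(\Psi,\e/4)\big) = \Big[\tfrac{\e^2}{16}\big(1-\tfrac{\e^2}{64}\big)\Big]^{n-1}. \]
Substituting into the packing bound and using $1-\e^2/64>63/64$ for $0<\e<1$ gives $|\mathcal{N}|\le (1024/63)^{n-1}\,\e^{-2(n-1)}$, so the lemma holds with $K_n=(1024/63)^{n-1}\approx 16.25^{\,n-1}$, which is comfortably below the stated $2^{n+1}n^n$ for every $n\ge 1$. The only genuine obstacle is the $\mathrm{Beta}$-law computation of the cap volume; everything else is the standard maximal-separated-set packing argument, and the gain of a full power of $\e$ over a naive sphere net comes precisely from measuring against the phase-quotiented Haar measure.
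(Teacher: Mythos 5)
Your proof is correct, but it takes a genuinely different route from the paper's. The paper constructs the net explicitly: it first removes the phase by rotating each representative so its first coordinate is real, identifies the result with the real sphere $\mathbb{S}^{2(n-1)}\subset\R^{2n-1}$, and then grids the $2(2n-1)$ faces of a circumscribing box and projects radially inward (a contraction), counting roughly $2(2n-1)(\sqrt{2(n-1)}/\e)^{2(n-1)}$ grid points. You instead take a maximal $(\e/2)$-separated set---automatically an $\e$-net in the sense of Definition \ref{d.e-net}---and bound its cardinality by a packing argument against the $\mathrm{U}(n)$-invariant Haar measure on $\mathcal{B}^1(\C^n)$, using the exact cap volume coming from the $\mathrm{Beta}(1,n-1)$ law of the overlap $|\langle\mx{v},\mx{w}\rangle|^2$; your computations ($\|\Psi-\Phi\|_2^2=2-2|\langle\mx{v},\mx{w}\rangle|$ after minimizing the phase, $1-s=(\e^2/16)(1-\e^2/64)$, and the comparison $(1024/63)^{n-1}\le 2^{n+1}n^n$ for all $n\ge1$) all check out, as does the standard derivation of the Beta law by writing a Haar vector as a normalized complex Gaussian. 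What each approach buys: the paper's net is explicit and entirely elementary (explicitness is irrelevant for the application, since the net enters Proposition \ref{p.main} only through a union bound on its cardinality); yours is non-constructive but yields the sharper constant $K_n=(1024/63)^{n-1}\approx 16.25^{\,n-1}$, exponential in $n$ rather than of order $2^{n+1}n^n$, an improvement that would propagate to slightly better constants in Proposition \ref{p.main} and Theorem \ref{t.main}---though for $n=2$ your $1024/63$ sits between the paper's crude $32$ and the $\pi$ of Remark \ref{r.K2}. One small correction to your framing: the gain of the exponent $2(n-1)$ over the naive $2n-1$ is not unique to the quotient-measure viewpoint, since the paper obtains the same exponent by explicitly quotienting out the phase (fixing $v_1\in\R$) before netting.
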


\begin{proof} For any $\mx{v}=[v_1,\ldots,v_n]\in\C^n$, there is some $\zeta\in\C$ with $|\zeta|=1$ so that $\zeta v_1\in\R$.
Hence, if we produce an $\e$-net for the set $S'=\{\mx{v}\in\C^n\colon \|\mx{v}\|=1, v_1\in\R\}$ then the projection of this set
into $\mathcal{B}^1(\C^n)$ will still be an $\e$-net (since the projection is a contraction by \eqref{e.def.metric}).  The set $S'$ is the
unit sphere in $\R\times\C^{n-1}\cong\R^{2n-1}$, so we need to produce an $\e$-net for the sphere $\mathbb{S}^{2(n-1)}\subset \R^{2n-1}$.

To produce such a (crude) net, we circumscribe the sphere in a box in $\R^{2n-1}$.  We put grid points on the
$2(2n-1)$ faces, each of which is a unit box of dimension $2(n-1)$, and radially project them onto the sphere.  The radial projection inward is a contraction, so it suffices for the points on the surface of the box to form an $\e$-net.

Thus, we populate each of the $2(2n-1)$ faces with grid points, given grid spacing $\frac{1}{N}$ for $N$ to be chosen shortly.
The maximal distance between any two grid points is the box-diagonal $\sqrt{2(n-1)}/N$, and so the maximal distance
between any point in the face and its nearest grid point is $\le \sqrt{2(n-1)}/2N$.  So we must choose $N$ large enough
that $\sqrt{2(n-1)}/2N<\e/2$, i.e. $N> \sqrt{2(n-1)}/\e$.  The corresponding number of grid points is $N^{2(n-1)}$ per face, and with $2(2n-1)$ faces, this gives $2(2n-1)N^{(2(n-1)}$ grid points.

Since we only need to choose
$N$ any small amount larger than $\sqrt{2(n-1)}/\e$, we can construct an $\e$-net with any number of points larger than this
bound.  (This requires possibly choosing a non-integer $N$, but this can be done by having a grid with one row spaced closer
than all the others.)  Since $2(2n-1)(\sqrt{2(n-1)})^{2(n-1)} = 2^{n+1}(n-\frac12)(n-1)^{n-1} < 2^{n+1}n^n$, this completes the proof. \end{proof}

\begin{remark} \label{r.K2} The above is a blunt overestimate.  For example, an elementary argument using polar coordinates with $n=2$ shows that $K_2$ can be taken $\le \pi$, as opposed to $32$.  \end{remark}

\subsection{Concentration of Measure}

The second ingredient we need is a concentration of measure inequality which follows essentially unchanged from \cite{HP00,HR03}.

\begin{lemma} \label{l.Hiai.Petz} Let $d\in\N$, and fix a point $\mx{x}\in \mathbb{S}^{2d-1}\subset\C^d$.  Let $\mx{Z}$ be a Haar random variable on $\mathbb{S}^{2d-1}$.  Then for $0<\e<1$,
\[ \mathbb{P}(|\langle \mx{Z},\mx{x}\rangle|\ge\e) \le e^{-(2d-1)\e^2}. \]
\end{lemma}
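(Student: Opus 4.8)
The plan is to use the unitary invariance of the Haar measure to collapse the problem to a one--dimensional statement about a single coordinate, and then to obtain the exponential tail either from the exact distribution of that coordinate or from concentration of measure on the sphere (the latter being the route suggested by the attribution to \cite{HP00,HR03}). First I would observe that $\mathrm{U}(d)$ acts transitively on $\mathbb{S}^{2d-1}\subset\C^d$, preserves the Haar measure, and leaves the quantity $|\langle\cdot,\mx{x}\rangle|$ invariant under $\mx{Z}\mapsto U\mx{Z},\ \mx{x}\mapsto U\mx{x}$. Choosing $U$ with $U\mx{x}=\mx{e}_1$ therefore reduces the claim to bounding $\mathbb{P}(|Z_1|\ge\e)$, where $Z_1=\langle\mx{Z},\mx{e}_1\rangle$ is the first coordinate of a Haar--random $\mx{Z}$.

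The cleanest route to an explicit tail is the exact distribution. It is a standard fact that for $\mx{Z}$ uniform on the complex sphere the vector $(|Z_1|^2,\ldots,|Z_d|^2)$ is uniformly distributed on the probability simplex, i.e.\ it is $\mathrm{Dirichlet}(1,\ldots,1)$; hence the marginal $|Z_1|^2$ is $\mathrm{Beta}(1,d-1)$, with density $(d-1)(1-s)^{d-2}$ on $[0,1]$. Integrating gives the exact identity $\mathbb{P}(|Z_1|\ge\e)=\mathbb{P}(|Z_1|^2\ge\e^2)=(1-\e^2)^{d-1}$. One can equally derive this by writing $\mx{Z}=\mx{G}/\|\mx{G}\|$ for a standard complex Gaussian $\mx{G}$, so that $|Z_1|^2=|G_1|^2/\sum_{j}|G_j|^2$ is a ratio of a $\chi^2_2$ to a $\chi^2_{2d}$ variate. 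The desired exponential bound then follows by comparing this exact expression with the claimed right-hand side, using the elementary inequality $\log(1-t)\le -t$ to pass from $(1-\e^2)^{d-1}$ to a Gaussian-type tail $e^{-(d-1)\e^2}$.

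The alternative, which is presumably what ``follows essentially unchanged from \cite{HP00,HR03}'' and which explains the appearance of the factor $2d-1$, is pure concentration of measure. The function $f(\mx{Z})=|\langle\mx{Z},\mx{x}\rangle|$ is $1$-Lipschitz on $\mathbb{S}^{2d-1}$ (since $|f(\mx{Z})-f(\mx{W})|\le|\langle\mx{Z}-\mx{W},\mx{x}\rangle|\le\|\mx{Z}-\mx{W}\|_2$), and Lévy's concentration inequality on the $(2d-1)$-dimensional sphere controls deviations of such an $f$ from its median (or mean) with a sub-Gaussian rate in which the sphere dimension $2d-1$ enters the exponent directly. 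Here the key auxiliary estimate is that the mean (equivalently the median) of $f$ is negligible: by symmetry $\mathbb{E}|Z_1|^2=1/d$, so $\mathbb{E}|Z_1|\le(\mathbb{E}|Z_1|^2)^{1/2}=d^{-1/2}$, which vanishes as $d\to\infty$ and can be absorbed into the threshold $\e$.

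I expect the main obstacle to be pinning down the precise constant $2d-1$ in the exponent together with the fact that the bound is measured from $0$ rather than from the mean. The exact computation gives the clean closed form $(1-\e^2)^{d-1}$, so the delicate point is reconciling this with the Lipschitz/concentration estimate: one must either sharpen the comparison $(1-\e^2)^{d-1}\le e^{-c\,\e^2}$ to extract the optimal dimensional rate, or carefully account for the small but nonzero mean of $|Z_1|$ when converting the median-centered concentration bound into a one-sided estimate from the origin. Controlling this lower-order mean term, uniformly in $\e\in(0,1)$ and $d$, is the step that requires the most care.
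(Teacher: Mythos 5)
Your reduction by unitary invariance and your exact computation are both correct: $|Z_1|^2$ is indeed $\mathrm{Beta}(1,d-1)$, so $\mathbb{P}\bigl(|\langle \mx{Z},\mx{x}\rangle|\ge\e\bigr)=(1-\e^2)^{d-1}$ exactly. But you should follow that computation to its logical conclusion: it \emph{refutes} the lemma as stated, rather than leaving a constant to be ``pinned down.'' The claimed bound requires $(1-\e^2)^{d-1}\le e^{-(2d-1)\e^2}$, i.e.\ $-\ln(1-\e^2)\ge\frac{2d-1}{d-1}\,\e^2>2\e^2$, and since $-\ln(1-t)=t+t^2/2+\cdots$, this fails for every $d\ge 2$ whenever $\e^2$ is below the root of $-\ln(1-t)=2t$ (about $0.797$). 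Concretely, for $d=2$ and $\e=\tfrac12$ the exact probability is $\tfrac34$ while the claimed bound is $e^{-3/4}\approx 0.47$. No sharpening of $\log(1-t)\le -t$ can close this, because the obstruction is the exact probability itself, not the estimate; for the same reason your Route 2 is moot --- no concentration inequality (L\'evy's included, which in any case carries a prefactor $2$ and a constant $<1$ in the exponent, plus the median bookkeeping you worry about) can certify a tail smaller than the true one. The factor $2d-1$ presumably crept in from the real dimension of $\mathbb{S}^{2d-1}$, but for the modulus of a \emph{complex} inner product the effective dimension is $d-1$, and the correct clean statement is exactly what your elementary step yields: $\mathbb{P}(|\langle\mx{Z},\mx{x}\rangle|\ge\e)=(1-\e^2)^{d-1}\le e^{-(d-1)\e^2}$.

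For comparison: the paper offers no proof of this lemma at all --- it is asserted to follow ``essentially unchanged'' from \cite{HP00,HR03} --- so your exact-distribution argument is in fact the complete and preferable treatment, and it is sharper than any citation-based concentration bound. Note also that the failure occurs precisely in the regime the paper uses: in the proof of Theorem \ref{t.main} one takes $\e^2\sim 2n^3\log_2 d_{n,m}/d_{n,m}\to 0$. The damage is only to constants, not to the theorems: replacing the exponent $(2d_{n,m}-1)\e^2/4$ in Proposition \ref{p.main} by the correct $(d_{n,m}-1)\e^2/4$ and taking, say, $\e^2=4n^3\log_2 d_{n,m}/d_{n,m}$ (or accepting a final exponent near $n^3/2$ instead of $n^3$) leaves the superpolynomial concentration and the qualitative conclusions of Theorem \ref{t.main} intact, with minor adjustments to the thresholds such as $m>42$. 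If you write this up, prove the corrected lemma with constant $d-1$ and propagate the constant through Proposition \ref{p.main} and the final optimization; that is the mathematically sound version of what the paper intends.
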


We will also need the following basic norm inequality.

\begin{lemma} \label{l.tensor.deriv.ineq} For any $\mx{v},\mx{w}\in\C^n$,
\begin{equation} \label{e.tensor.deriv.ineq} \|\mx{v}^{\tensor m}-\mx{w}^{\tensor m}\|_2 \le m\cdot\max\{\|\mx{v}\|,\|\mx{w}\|\}^{m-1}\|\mx{v}-\mx{w}\|. \end{equation}
\end{lemma}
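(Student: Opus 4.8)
The plan is to prove the telescoping-type inequality $\|\mx{v}^{\tensor m}-\mx{w}^{\tensor m}\|_2 \le m\cdot\max\{\|\mx{v}\|,\|\mx{w}\|\}^{m-1}\|\mx{v}-\mx{w}\|$ by writing the difference of the two pure tensor powers as a telescoping sum and then bounding each term using the multiplicativity of the Hilbert--Schmidt norm on product tensors (which follows from its definition via $\langle \tensor_j\mx{v}_j,\tensor_j\mx{w}_j\rangle_2 = \prod_j\langle\mx{v}_j,\mx{w}_j\rangle_V$, giving $\|\tensor_{j=1}^m\mx{u}_j\|_2 = \prod_{j=1}^m\|\mx{u}_j\|$). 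First I would introduce the standard telescoping identity
\[ \mx{v}^{\tensor m}-\mx{w}^{\tensor m} = \sum_{k=1}^m \mx{v}^{\tensor(k-1)}\tensor(\mx{v}-\mx{w})\tensor\mx{w}^{\tensor(m-k)}, \]
which one verifies by noting that consecutive summands cancel, leaving only $\mx{v}^{\tensor m}$ and $-\mx{w}^{\tensor m}$.

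Next I would apply the triangle inequality for $\|\cdot\|_2$ to this sum, reducing the problem to bounding each of the $m$ individual terms. For the $k$-th term, the multiplicativity of the Hilbert--Schmidt norm on elementary tensors gives
\[ \|\mx{v}^{\tensor(k-1)}\tensor(\mx{v}-\mx{w})\tensor\mx{w}^{\tensor(m-k)}\|_2 = \|\mx{v}\|^{k-1}\,\|\mx{v}-\mx{w}\|\,\|\mx{w}\|^{m-k}. \]
Writing $M = \max\{\|\mx{v}\|,\|\mx{w}\|\}$, each factor $\|\mx{v}\|^{k-1}\|\mx{w}\|^{m-k}$ is at most $M^{k-1}M^{m-k} = M^{m-1}$, so every term is bounded by $M^{m-1}\|\mx{v}-\mx{w}\|$. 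Summing over the $m$ terms yields the factor of $m$ and completes the estimate.

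I do not expect any serious obstacle here; the only points requiring care are verifying that the telescoping identity is arranged so that adjacent terms genuinely cancel (keeping track of how many $\mx{v}$'s versus $\mx{w}$'s appear in each slot), and confirming that the norm multiplicativity identity extends to the mixed tensor $\mx{v}^{\tensor(k-1)}\tensor(\mx{v}-\mx{w})\tensor\mx{w}^{\tensor(m-k)}$ — but this is immediate from the defining property of $\langle\cdot,\cdot\rangle_2$ since each tensor slot contributes its own norm factor independently. The degenerate cases $m=1$ (where the bound reads $\|\mx{v}-\mx{w}\|\le\|\mx{v}-\mx{w}\|$) and where one of the vectors is zero are automatically covered by the same argument.
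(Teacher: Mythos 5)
Your proof is correct. Note, however, that the paper does not actually prove this lemma in-house: its entire ``proof'' is a citation to \cite[Theorem 3.9]{Fr82}, so your telescoping argument supplies a self-contained proof of a statement the paper outsources to the literature. Your two ingredients are both sound: the identity
\[ \mx{v}^{\tensor m}-\mx{w}^{\tensor m} = \sum_{k=1}^m \mx{v}^{\tensor(k-1)}\tensor(\mx{v}-\mx{w})\tensor\mx{w}^{\tensor(m-k)} \]
telescopes because the $k$-th summand equals $\mx{v}^{\tensor k}\tensor\mx{w}^{\tensor(m-k)}-\mx{v}^{\tensor(k-1)}\tensor\mx{w}^{\tensor(m-k+1)}$, and the multiplicativity $\|\tensor_{j=1}^m\mx{u}_j\|_2=\prod_{j=1}^m\|\mx{u}_j\|$ is immediate from the defining property $\langle\tensor_j\mx{v}_j,\tensor_j\mx{w}_j\rangle_2=\prod_j\langle\mx{v}_j,\mx{w}_j\rangle_V$ applied with $\mx{v}_j=\mx{w}_j=\mx{u}_j$. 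In fact your argument proves the slightly sharper intermediate estimate
\[ \|\mx{v}^{\tensor m}-\mx{w}^{\tensor m}\|_2 \le \Big(\sum_{k=1}^m \|\mx{v}\|^{k-1}\|\mx{w}\|^{m-k}\Big)\|\mx{v}-\mx{w}\|, \]
from which the stated bound follows by majorizing each product by $\max\{\|\mx{v}\|,\|\mx{w}\|\}^{m-1}$; this sharper form is exactly the ``derivative of the $m$-th power'' bound one expects, and it is all that is needed in Corollary \ref{c.1}, where both vectors are unit vectors anyway. The trade-off is the usual one: the paper's citation buys brevity, while your elementary argument makes the paper self-contained at the cost of a few lines and is, in essence, the standard proof of such perturbation bounds for tensor powers.
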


\begin{proof} This can be found as \cite[Theorem  3.9]{Fr82}.  \end{proof}

\begin{corollary} \label{c.1} Let $0<\e<1$, $n,m\in\N$, and let $\mathcal{N}(\e/m,n)$ be an $\e$-net for $\mathcal{B}^1(\C^n)$; cf.\ Lemma \ref{l.net}.  Let $\Psi\in\mathcal{B}^m(\C^n)$ be a Boson state with $\|\Psi\|_{\infty}\ge\e$.  Then there is some element $\mx{v}\in\mathcal{N}(\e/m,n)$ such that $|\langle \Psi, \mx{v}^{\tensor m}\rangle|\ge \e/2$. \end{corollary}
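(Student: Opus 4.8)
The plan is to combine Banach's theorem (Theorem \ref{t.Banach}), which realizes the spectral norm of a symmetric tensor as a maximum over rank-one symmetric tensors $\mx{w}^{\tensor m}$, with net approximation at the finer scale $\e/m$ and the tensor-power perturbation bound of Lemma \ref{l.tensor.deriv.ineq}. The factor $m$ appearing in that lemma is precisely what dictates the net tolerance $\e/m$: it will cancel the $1/m$ and leave an error of exactly $\e/2$.

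First I would fix a unit-length symmetric representative $T$ of the Boson state $\Psi$. Since both $\|\cdot\|_\infty$ and $\mx{u}\mapsto|\langle T,\mx{u}^{\tensor m}\rangle_2|$ are invariant under replacing $T$ by a phase multiple, this choice is immaterial and we may identify $\|\Psi\|_\infty$ with $\|T\|_\infty$ and $|\langle \Psi,\mx{v}^{\tensor m}\rangle|$ with $|\langle T,\mx{v}^{\tensor m}\rangle_2|$. Because $\Psi$ is symmetric and $\|\Psi\|_\infty\ge\e$, Theorem \ref{t.Banach} produces a unit vector $\mx{w}_0\in\mathbb{S}^1(\C^n)$ achieving the maximum, i.e.\ $|\langle T,\mx{w}_0^{\tensor m}\rangle_2|=\|\Psi\|_\infty\ge\e$.

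Next I would invoke the net. Regarding the class of $\mx{w}_0$ as a point of $\mathcal{B}^1(\C^n)$, the fact that $\mathcal{N}(\e/m,n)$ is an $(\e/m)$-net (Lemma \ref{l.net}, Definition \ref{d.e-net}) yields a net element at distance less than $\e/(2m)$ from it. By the definition \eqref{e.def.metric} of the metric on $\mathcal{B}^1(\C^n)$ as a minimum over phases, this gives a unit vector $\mx{v}$—a phase representative of that net element—with $\|\mx{w}_0-\mx{v}\|<\e/(2m)$; multiplying $\mx{v}$ by a phase leaves $|\langle T,\mx{v}^{\tensor m}\rangle_2|$ unchanged, so passing to this representative is harmless. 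Finally I would estimate $|\langle T,\mx{v}^{\tensor m}\rangle_2|\ge|\langle T,\mx{w}_0^{\tensor m}\rangle_2|-|\langle T,\mx{w}_0^{\tensor m}-\mx{v}^{\tensor m}\rangle_2|$, bound the second term by Cauchy--Schwarz and $\|T\|_2=1$ to get $|\langle T,\mx{w}_0^{\tensor m}-\mx{v}^{\tensor m}\rangle_2|\le\|\mx{w}_0^{\tensor m}-\mx{v}^{\tensor m}\|_2$, and then apply Lemma \ref{l.tensor.deriv.ineq} with $\|\mx{w}_0\|=\|\mx{v}\|=1$ to obtain $\|\mx{w}_0^{\tensor m}-\mx{v}^{\tensor m}\|_2\le m\|\mx{w}_0-\mx{v}\|<m\cdot\e/(2m)=\e/2$. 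Combining yields $|\langle T,\mx{v}^{\tensor m}\rangle_2|>\e-\e/2=\e/2$, as claimed. There is no genuine obstacle; the only points requiring care are the bookkeeping of phases (to replace Boson-sphere classes by honest unit-vector representatives in $\C^n$) and the observation that the net scale $\e/m$ is engineered exactly so that the $m$ coming from Lemma \ref{l.tensor.deriv.ineq} cancels.
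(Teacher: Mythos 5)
Your proposal is correct and follows essentially the same argument as the paper: Banach's Theorem \ref{t.Banach} to find the maximizing unit vector, the $(\e/m)$-net to approximate it within $\e/(2m)$, and Lemma \ref{l.tensor.deriv.ineq} plus Cauchy--Schwarz to control the error by $\e/2$. The only cosmetic difference is that the paper phrases the estimate as a contrapositive while you argue directly (and you handle the phase-representative bookkeeping for $\mathcal{B}^1(\C^n)$ a bit more explicitly than the paper does), but the inequalities are identical.
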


\begin{proof} We prove the contrapositive: suppose that $|\langle \Psi,\mx{v}^{\tensor m}\rangle_2|< \e/2$ for all $\mx{v}\in\mathcal{N}(\e/m,n)$.  By Banach's Theorem \ref{t.Banach}, there is some element $\mx{v}_0\in\mathcal{B}^1(\C^n)$ with $\|\Psi\|_{\infty} = |\langle \Psi,\mx{v}_0^{\tensor m}\rangle_2|$.  By definition, there is some element $\mx{v}\in\mathcal{N}(\e/m,n)$ with $\|\mx{v}_0-\mx{v}\|<\e/2m$.  Since $\mx{v}_0$ and $\mx{v}$ have length $1$, Lemma \ref{l.tensor.deriv.ineq} implies that $\|\mx{v}_0^{\tensor m}-\mx{v}^{\tensor m}\|_2<\e/2$.  Thus
\[ \|\Psi\|_{\infty} = |\langle\Psi,\mx{v}_0^{\tensor m}\rangle_2| \le |\langle\Psi,\mx{v}^{\tensor m}\rangle_2| + |\langle\Psi,\mx{v}_0^{\tensor m}-\mx{v}^{\tensor m}\rangle_2| < \frac{\e}{2} + \|\Psi\|_2\|\mx{v}_0^{\tensor m}-\mx{v}^{\tensor m}\|_2 < \e \]
where the penultimate inequality follows from the Cauchy--Schwarz inequality and the fact that $\Psi$ has length $1$.  \end{proof}

As an immediate consequence of Corollary \ref{c.1}, we see that, with respect to any probability measure on states $\Psi\in\mathcal{B}^m(\C^n)$,
\begin{equation*}  \mathbb{P}(\|\Psi\|_{\infty}\ge \e) \le \mathbb{P}\left(\max_{\mx{v}\in\mathcal{N}(\e/m,n)}|\langle\Psi,\mx{v}^{\tensor m}\rangle_2|\ge \e/2\right). \end{equation*}
From the union bound, it therefore follows that
\begin{equation} \label{e.P1}   \mathbb{P}(\|\Psi\|_{\infty}\ge \e) \le \#\mathcal{N}(\e/m,n)\cdot \max_{\mx{v}\in\mathcal{N}(\e/m,n)}\mathbb{P}(|\langle\Psi,\mx{v}^{\tensor m}\rangle_2|\ge \e/2). \end{equation}
Using Lemmas \ref{l.net} and \ref{l.Hiai.Petz}, we thus deduce the following.

\begin{proposition} \label{p.main} Let $0<\e<1$, and $n,m\in\N$.  Let $d_{n,m} = \binom{n+m-1}{m}$. With respect to the Haar measure on elements $\Psi\in\mathcal{B}^m(\C^n)$ (discussed in Section \ref{section Bosons}),
\[ \mathbb{P}(\|\Psi\|_{\infty}\ge \e) \le \frac{K_n m^{2(n-1)}}{\e^{2(n-1)}}e^{-(2d_{n,m}-1)\e^2/4}. \]
\end{proposition}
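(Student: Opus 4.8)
The plan is to assemble the proposition from three already-established ingredients: the union-bound estimate \eqref{e.P1}, the net cardinality bound of Lemma \ref{l.net}, and the concentration inequality of Lemma \ref{l.Hiai.Petz}. No new analysis is required; the content is in correctly identifying the objects so that the concentration lemma applies verbatim.

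First I would bound the cardinality of the net appearing in \eqref{e.P1}. The net there has tolerance $\e/m$, so Lemma \ref{l.net} (applicable since $0<\e/m\le\e<1$ as $m\in\N$) gives
\[ \#\mathcal{N}(\e/m,n) \le \frac{K_n}{(\e/m)^{2(n-1)}} = \frac{K_n m^{2(n-1)}}{\e^{2(n-1)}}, \]
which is exactly the prefactor in the claimed bound.

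Next I would estimate $\mathbb{P}(|\langle\Psi,\mx{v}^{\tensor m}\rangle_2|\ge \e/2)$ for each fixed $\mx{v}\in\mathcal{N}(\e/m,n)$. The one point needing care is that $\mx{v}^{\tensor m}$ is a \emph{unit} vector of $S^m(\C^n)$ in the Hilbert--Schmidt norm: since $\|\mx{v}\|=1$, one has $\langle \mx{v}^{\tensor m},\mx{v}^{\tensor m}\rangle_2 = \langle \mx{v},\mx{v}\rangle^m = \|\mx{v}\|^{2m}=1$ (this is \eqref{e.tensor.norm} for the Hilbert--Schmidt norm). Recalling from Section \ref{section Bosons} that the Haar measure on $\mathcal{B}^m(\C^n)$ is the push-forward of the uniform measure on $\mathbb{S}^{2d_{n,m}-1}$ under the isometric identification $S^m_1(\C^n)\cong\mathbb{S}^{2d_{n,m}-1}$, and that the modulus $|\langle\Psi,\mx{v}^{\tensor m}\rangle_2|$ is phase-invariant and hence descends to the quotient, I may apply Lemma \ref{l.Hiai.Petz} with $d=d_{n,m}$, with $\mx{x}=\mx{v}^{\tensor m}$ viewed as a fixed point of $\mathbb{S}^{2d_{n,m}-1}$, and with tolerance $\e/2$. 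This yields
\[ \mathbb{P}(|\langle\Psi,\mx{v}^{\tensor m}\rangle_2|\ge \e/2) \le e^{-(2d_{n,m}-1)(\e/2)^2} = e^{-(2d_{n,m}-1)\e^2/4}. \]
Crucially, this bound does not depend on $\mx{v}$, so the maximum over $\mx{v}\in\mathcal{N}(\e/m,n)$ in \eqref{e.P1} is controlled by the same quantity.

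Finally I would substitute both estimates into \eqref{e.P1} to obtain
\[ \mathbb{P}(\|\Psi\|_{\infty}\ge \e) \le \frac{K_n m^{2(n-1)}}{\e^{2(n-1)}}\,e^{-(2d_{n,m}-1)\e^2/4}, \]
which is the assertion. The main (and only) obstacle is the bookkeeping in the previous paragraph: correctly identifying the Haar random Boson state $\Psi$ with a Haar random vector on $\mathbb{S}^{2d_{n,m}-1}$ and verifying that $\mx{v}^{\tensor m}$ has unit Hilbert--Schmidt norm, so that Lemma \ref{l.Hiai.Petz} applies with $d=d_{n,m}$; once these identifications are in place, the remaining steps are direct substitution.
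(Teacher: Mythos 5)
Your proof is correct and follows essentially the same route as the paper's: both substitute the net-cardinality bound of Lemma \ref{l.net} and the concentration bound of Lemma \ref{l.Hiai.Petz} (applied with $d=d_{n,m}$, $\mx{x}=\mx{v}^{\tensor m}$, and tolerance $\e/2$) into the union-bound estimate \eqref{e.P1}. Your additional verifications---that $\e/m<1$ so the net lemma applies, that $\mx{v}^{\tensor m}$ is a Hilbert--Schmidt unit vector, and that the phase-invariant modulus of the inner product descends to the quotient $\mathcal{B}^m(\C^n)$---are details the paper leaves implicit, and you handle them correctly.
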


\begin{proof} The Haar measure on $\mathcal{B}^m(\C^n)$ is the push forward of the Haar measure on $\mathbb{S}^{2d_{n,m}-1}$.  Since the modulus of the inner product is invariant under a complex phase, we may apply Lemma \ref{l.Hiai.Petz} to conclude that, for any fixed $\Phi_0\in\mathcal{B}^m(\C^n)$,
\[ \mathbb{P}(|\langle\Psi,\Phi_0\rangle_2|\ge\e/2) \le e^{-(2d_{n,m}-1)\e^2/4}. \]
Applying this with $\Phi_0=\mx{v}^{\tensor m}$ for the maximizing $\mx{v}\in\mathcal{N}(\e/m,n)$ in \eqref{e.P1}, and applying Lemma \ref{l.net}, yields the result.  \end{proof}

Let us note that, in the qubit case $n=2$, by Remark \ref{r.K2}, we have $K_2\le\pi$, and in this case Proposition \ref{p.main} says
\begin{equation} \label{e.p.main.n=2} \mathbb{P}(\|\Psi\|_{\infty}\ge \e) \le \frac{\pi m^2}{\e^2}e^{-(2m+1)\e^2/4}. \end{equation}

\subsection{Proof of Theorem \ref{t.main}}

As above, let $d_{n,m}=\binom{n+m-1}{m}$.  Here we have $n$ fixed and $m$ potentially large.  Let
\[ \e^2 = 2n^3\frac{\log_2 d_{n,m}}{d_{n,m}}. \]
Since $m\mapsto d_{n,m}$ is an increasing function of $m$ for each fixed $n$, and since $x\mapsto \frac{\log_2 x}{x}$ tends to $0$ as $x\to\infty$, there is some $m_0(n)$ so that $2n^3\frac{\log_2d_{n,m}}{d_{n,m}}<1$ for $m\ge m_0(n)$. Applying Proposition \ref{p.main}, this yields
\begin{equation} \label{e.final.1} \mathbb{P}\left(\|\Psi\|_{\infty}^2 \ge 2n^3\frac{\log_2d_{n,m}}{d_{n,m}}\right) \le \frac{K_nm^{2(n-1)}}{(2n^3\frac{\log_2 d_{n,m}}{d_{n,m}})^{n-1}}e^{-(2d_{n,m}-1)\cdot \frac{n^3}{2}\frac{\log_2 d_{n,m}}{d_{n,m}}}. \end{equation}
Rearrange the upper bound as a product of three terms:
\begin{equation} \label{e.final.2} \frac{K_n}{(2n^3)^{n-1}}\cdot m^{2(n-1)}(d_{n,m})^{n-1} e^{-n^3\log_2 d_{n,m}}\cdot \frac{e^{\frac{n^3}{2}\frac{\log_2 d_{n,m}}{d_{n,m}}}}{(\log_2 d_{n,m})^{n-1}}. \end{equation}
For the first factor in \eqref{e.final.2}, Lemma \ref{l.net} gives
\begin{equation} \label{e.final.3} \frac{K_n}{(2n^3)^{n-1}} \le \frac{2^{n+1}n^n}{(2n^3)^{n-1}} = \frac{4}{n^{2n}}. \end{equation}
For the second factor in \eqref{e.final.2}, we begin by noting that
\[ d_{n,m} = \binom{m+n-1}{m} = \binom{m+n-1}{n-1} = \frac{(m+n-1)(m+n-2)\cdots(m+1)}{(n-1)!} \ge \frac{m^{n-1}}{(n-1)!}. \]
Thus $m^{2(n-1)} \le (n-1)!^2 (d_{n,m})^2$.  Stirling's approximation yields $(n-1)! \le e^{-(n-1)}n^{n-1/2}$, and so the second term in \eqref{e.final.2} is bounded above by
\[ (n-1)!^2 (d_{n,m})^{n+1} e^{-n^3\frac{\ln d_{n,m}}{\ln 2}} \le e^{-2(n-1)}n^{2n} (d_{n,m})^{n+1-\frac{1}{\ln 2}n^3} < e^{-2(n-1)}n^{2n}(d_{n,m})^{-n^3} \]
for $n\ge 2$.  Combining this with \eqref{e.final.3}, we see that the first two factors in \eqref{e.final.2} are bounded above by
\begin{equation} \label{e.final.4} 4e^{-2(n-1)} (d_{n,m})^{-n^3}. \end{equation}
For the third term in \eqref{e.final.2}, we've already chosen $m\ge m_0(n)$ so that $2n^3\frac{\log_2 d_{n,m}}{d_{n,m}} < 1$, and thus the exponential factor is $<\frac14$.  Hence, we have shown that, for $m\ge m_0(n)$,
\[ \mathbb{P}\left(\|\Psi\|_{\infty}^2 \ge 2n^3\frac{\log_2d_{n,m}}{d_{n,m}}\right) \le \frac{4e^{-2(n-1)+\frac14}}{(\log_2 d_{n,m})^{n-1}} (d_{n,m})^{-n^3}. \]
When $n\ge 2$, $4e^{-2(n-1)+\frac14} \le 4e^{-\frac74}<1$, and the denominator is $\ge 1$, so the upper bound is just $(d_{n,m})^{-n^3}$.  Taking $-\log_2$ of both sides of the inequality inside the $\mathbb{P}$ then verifies \eqref{e.main.1}.

If we simply evaluate \eqref{e.main.1} at $n=2$, we get the estimate
\[ \mathbb{P}\Big(E(\Psi) \ge \log_2(m+1)-\log_2\log_2(m+1)-4\Big) \ge 1-\frac{1}{(m+1)^{8}}, \qquad \text{for}\; m\ge 108. \]
(There is nothing sacrosanct about the exponent $n^3$; as the above analysis shows, at the expense of increasing the constant $m_0(n)$ and a larger additive constant inside $\mathbb{P}$, we can have any exponent we like, so the probability decays super-polynomially.)

To derive \eqref{e.main.2}, we make a more careful analysis using \eqref{e.p.main.n=2}.  Let $\alpha>0$, and set $\e^2 = \alpha\frac{\log_2(m+1)}{m+1}$.  We must choose $m$ large enough that this is $<1$.  Mimicking the preceding analysis, we find that
\begin{align*} \mathbb{P}\left(\|\Psi\|_{\infty}^2 \ge \alpha\frac{\log_2(m+1)}{m+1}\right) &\le \frac{\pi m^2(m+1)}{\alpha \log_2(m+1)} e^{-(2(m+1)-1)\frac{\alpha}{4}\frac{\log_2(m+1)}{m+1}} \\
&\le \frac{\pi}{\alpha}\cdot m^2(m+1)(m+1)^{-\frac{\alpha}{2\ln 2}}\cdot \frac{e^{\frac{\alpha}{4}\frac{\log_2(m+1)}{m+1}}}{\log_2(m+1)}.
\end{align*}
Due to the condition $\e^2<1$, the last term is $<e^{\frac14}$, and so we have the general estimate
\[ \mathbb{P}\left(\|\Psi\|_{\infty}^2 \ge \alpha\frac{\log_2(m+1)}{m+1}\right) \le \frac{\pi e^{\frac14}}{\alpha} m^{-\frac{\alpha}{2\ln 2}+3}, \qquad \text{provided}\quad \frac{\log_2(m+1)}{m+1} < \frac1\alpha. \]
Taking $\alpha=8$ which is larger than $2\pi e^{\frac14}$, we have $-\frac{8}{2\ln 2}+3<-\frac52$, and it is easy to verify that $\frac{\log_2(m+1)}{m+1}<\frac18$ for $m> 42$.  This justifies \eqref{e.main.2}, concluding the proof.

\subsection*{Acknowledgments} We would like to thank Bruce Driver and Brian Hall for useful conversations regarding the proof of Proposition \ref{p.Schur}.

\end{document}